\pgfplotsset{compat=1.5}
\newtheorem{theorem}{Theorem}
\newtheorem{lemma}[theorem]{Lemma}
\newtheorem{proposition}[theorem]{Proposition}
\newtheorem{definition}[theorem]{Definition}
\newenvironment{proofof}[1]{\begin{trivlist} \item {\bf Proof
#1:~~}}
  {\qed\end{trivlist}}
\newcommand{\namedref}[2]{\hyperref[#2]{#1~\ref*{#2}}}
\renewcommand{\O}[1]{\ensuremath{O\left(#1\right)}}
\newcommand{\tO}[1]{\ensuremath{\tilde{O}\left(#1\right)}}
\newcommand{\eps}{\varepsilon}
\def \calA    {\mdef{\mathcal{A}}}
\def \calC    {\mdef{\mathcal{C}}}
\def \calE    {\mdef{\mathcal{E}}}
\def \calN    {\mdef{\mathcal{N}}}
\def \calS    {\mdef{\mathcal{S}}}
\def \calX    {\mdef{\mathcal{X}}}
\def \calY    {\mdef{\mathcal{Y}}}
\def \bX    {\mdef{\mathbf{X}}}
\def \R    {\mdef{\mathbb{R}}}
\newcommand{\mdef}[1]{{\ensuremath{#1}}\xspace}  
\DeclareMathOperator*{\argmin}{argmin}
\newcommand{\E}[2][]{\mdef{\underset{#1}{\mathbb{E}}\left[#2\right]}} 
\renewcommand{\E}{\mathbb{E}}
\newcommand{\ignore}[1]{}
\newif\ifnotes\notestrue 
\newcommand{\samson}[1]{\textcolor{blue}{{\bf (Samson:} {#1}{\bf ) }} \marginpar{\tiny\bf
             \begin{minipage}[t]{0.5in}
               \raggedright S:
            \end{minipage}}}
\newcommand{\david}[1]{\textcolor{purple}{{\bf (David:} {#1}{\bf ) }} \marginpar{\tiny\bf
             \begin{minipage}[t]{0.5in}
               \raggedright D:
            \end{minipage}}} 
\newcommand{\samson}[1]{}
\newcommand{\david}[1]{}
\renewcommand*{\@fnsymbol}[1]{\textcolor{mahogany}{\ensuremath{\ifcase#1\or *\or \dagger\or \ddagger\or
 \mathsection\or \triangledown\or \mathparagraph\or \|\or **\or \dagger\dagger
   \or \ddagger\ddagger \else\@ctrerr\fi}}}
\providecommand{\email}[1]{\href{mailto:#1}{\nolinkurl{#1}\xspace}}
\definecolor{mahogany}{rgb}{0.75, 0.25, 0.0}
\definecolor{darkblue}{rgb}{0.0, 0.0, 0.55}
\definecolor{darkpastelgreen}{rgb}{0.01, 0.75, 0.24}
\definecolor{darkgreen}{rgb}{0.0, 0.2, 0.13}
\definecolor{darkgoldenrod}{rgb}{0.72, 0.53, 0.04}
\definecolor{darkred}{rgb}{0.55, 0.0, 0.0}
\definecolor{forestgreenweb}{rgb}{0.13, 0.55, 0.13}
\definecolor{greencss}{rgb}{0.0, 0.5, 0.0}
\definecolor{bleudefrance}{rgb}{0.19, 0.55, 0.91}
\definecolor{darkcerulean}{rgb}{0.03, 0.27, 0.49}
  \DeclareFontShape{T1}{lmr}{m}{scit}{<->ssub*lmr/m/scsl}{}%
\newcommand{\tell}{\tilde{\ell}}
\newcommand{\gap}{\mathrm{gap}}
\newcommand{\out}{\mathrm{out}}
\newcommand{\tq}{\tilde{q}}
\newcommand{\bintree}{\textsc{BinTree}}
\newcommand{\recurgap}{\textsc{RecurGap}}
\newcommand{\combined}{\textsc{Combined}}
\begin{document}

\allowdisplaybreaks

\title{Nearly-Optimal Private Selection via Gaussian Mechanism}
\date{\today}

\author{Ethan Leeman \\Google Research\\ \texttt{\small ethanleeman@google.com} \and Pasin Manurangsi \\Google Research\\ \texttt{\small pasin@google.com}}

\maketitle

\begin{abstract}
%
Steinke~\cite{DPorg-open-problem-selection} recently asked the following intriguing open question: Can we solve the differentially private selection problem with nearly-optimal error by only (adaptively) invoking Gaussian mechanism on low-sensitivity queries? We resolve this question positively. In particular, for a candidate set $\calY$, we achieve error guarantee of $\tO{\log |\calY|}$, which is within a factor of $(\log \log |\calY|)^{O(1)}$ of the exponential mechanism~\cite{McSherryT07}. This improves on Steinke's mechanism which achieves an error of $O(\log^{3/2} |\calY|)$.
\end{abstract}

\section{Introduction}

Differential privacy (DP)~\cite{dwork-calibrating} has become one of the most widely used notion of privacy, partly due to its mathematically rigorous protection for the user's data. In the original work that proposed DP~\cite{dwork-calibrating}, Dwork et al. also provided a mechanism that satisfies DP: If the desired output has low sensitivity, then we can simply add an appropriately-calibrated Laplace noise. Later on, other noise distribution--such as the Gaussian distribution~\cite{DworkKMMN06}--has also been studied. These so-called Laplace and Gaussian mechanisms remain a staple of DP, both in theory and practice.

Meanwhile, another widely-used mechanism is the \emph{Exponential Mechanism}~\cite{McSherryT07}, which can be easily described through the \emph{Selection} problem, defined below.

\begin{definition}[Selection]
In the Selection problem, there is a candidate set $\calY$, and for each candidate $y \in \calY$ there is a sensitivity-1 loss function $\ell_y: \calX^* \to \R$. 
Given an input $X \in \calX^*$, the goal of a selection algorithm is to output $y_{\out}$ that approximately minimizes the loss $\ell_y(X)$. We say that an algorithm $M: \calX^* \to \calY$ solves Selection with expected error $\nu$ if
\begin{align*}
\E_{y_{\out} \sim M(X)}[\ell_{y_{\out}}(X) - \min_{y \in \calY} \ell_y(X)] \leq \nu.
\end{align*}
Similarly, we say that an algorithm solves Selection with error $\nu$ with probability $1 - \beta$ iff
\begin{align*}
\Pr_{y_{\out} \sim M(X)}[\ell_{y_\out}(X) - \min_{y \in \calY} \ell_y(X) \leq \nu] \geq 1 - \beta.
\end{align*}
When there is no ambiguity, we simply write $\ell_y$ instead of $\ell_y(X)$ for brevity.
\end{definition}

The $\eps$-DP Exponential Mechanism solves Selection with expected error $O(\log |\calY| / \eps)$. Since Selection can be used to model a variety of tasks, such as convex optimization~\cite{BassilyST14}, combinatorial optimization~\cite{GuptaLMRT10} and PAC learning~\cite{BeimelNS16}. Exponential mechanism immediately yields DP algorithms for these tasks, often with nearly tight error guarantees.

Given the importance of both types of mechanism, it is natural to ask whether we can implement the Exponential Mechanism using only noise addition mechanisms. Steinke~\cite{DPorg-open-problem-selection} recently formalized this problem using the Gaussian mechanism. To further elaborate on this, it is best to focus on the notion of zero-concentrated differential privacy (zCDP)~\cite{DworkR16,BunS16}. Any $\eps$-DP satisfies $\frac{\eps^2}{2}$-zCDP; thus, the aforementioned Exponential Mechanism is an $\rho$-zCDP mechanism which solves Selection with expected error $O(\log |\calY| / \sqrt{\rho})$. Meanwhile, adding Gaussian noise $\calN\left(0, \frac{1}{2\rho}\right)$ to any sensitivity-1 query satisfies $\rho$-zCDP. Furthermore, the composition theorem for zCDP is to simply add up to $\rho$ values\footnote{In particular, this works even in the fully adaptive composition (aka privacy filter) setting~\cite{WhitehouseR0023}.}. As a result, one may formalize the model as follows\footnote{Strictly speaking, Steinke~\cite{DPorg-open-problem-selection} proposed a model where the queries have equal budget. However, it is simple to see that the two models are equivalent; see \Cref{app:eqaul-v-unequal}.}:
\begin{definition}[Gaussian Mechanism with Budget~\cite{DPorg-open-problem-selection}] \label{def:model-with-budget}
In the \emph{Gaussian mechanism with budget} model, the algorithm is given a finite budget $\rho > 0$ and can select the number of rounds $M$. At the $i$-th iteration, the algorithm can issue a sensitivity-1 query $q_i: \calX^* \to \R$ and the budget $\rho_i$, and it will receive back $q_i(X) + Z_i$ where $Z_i \sim \calN\left(0, \frac{1}{2\rho_i}\right)$. Here we enforce that $\sum_{i=1}^M \rho_i \leq \rho$. 
\end{definition}

\cite{DPorg-open-problem-selection} asked whether one can solve Selection with expected error $O(\log |\calY| / \sqrt{\rho})$ (similar to the Exponential Mechanism) in this model. Note that the trivial baseline--which queries $\ell_y(X)$ for all $y \in \calY$ using equal budget--results in an error of $\tO{\sqrt{|\calY|} / \sqrt{\rho}}$, significantly higher than that of the Exponential Mechanism. Perhaps surprisingly, Steinke~\cite{DPorg-open-problem-selection} gave a simple algorithm that significantly improves upon this and achieves an expected error of $O\left(\frac{\log^{3/2} |\calY|}{\sqrt{\rho}}\right)$. (See \Cref{sec:prelim-bintree} for more detail.) Nevertheless, there is still a gap of $O(\sqrt{\log |\calY|})$ between this bound and the error of the Exponential Mechanism. Thus, Steinke~\cite{DPorg-open-problem-selection} asks whether this gap can be closed. 

\subsection{Our Contribution}

The main contribution of our work is to answer this positively, up to a lower-order term. In particular, we give an algorithm with error $\O{\frac{\log |\calY| \cdot (\log \log |\calY|)^{O(1)}}{\sqrt{\rho}}}$, as stated below.
\begin{theorem}[Informal; See \Cref{thm:main-exp}]
There is an algorithm in the Gaussian mechanism with budget model that solves Selection with expected error $\O{\frac{\log |\calY| \cdot (\log \log |\calY|)^{11}}{\sqrt{\rho}}}$.
\end{theorem}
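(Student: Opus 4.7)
The plan is a recursive algorithm, which I will call \combined, that in $T = O(\log\log|\calY|)$ outer rounds shrinks a current ``shortlist'' $S_t\subseteq \calY$ from size $|\calY|$ down to a constant. Each round $S_t \mapsto S_{t+1}$ reduces the size from $N_t$ to $\polylog N_t$, so iterating $\log$ this many times bottoms out below a constant and $T$ rounds suffice. The total budget $\rho$ is split equally across the rounds, which inflates each round's noise by at most $\sqrt T = \polylog(\log|\calY|)$, comfortably within the target $(\log\log |\calY|)^{O(1)}$ overhead. Within a single round I combine Steinke's \bintree with a new dyadic peeling subroutine, which I will call \recurgap: \bintree identifies a near-optimal candidate and a high-accuracy estimate of the optimum, and \recurgap uses this estimate to certify the small subset of candidates still in contention.

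\textbf{Per-round mechanics.} Let $S$ be the shortlist entering the round, with $|S|=N$, and $\rho'=\Theta(\rho/T)$ its budget. First, run \bintree on $S$ with budget $\Theta(\rho')$ to obtain a tentative winner $\tilde y$, and issue one sensitivity-$1$ Gaussian query on $\ell_{\tilde y}(X)$ to get an estimate $\hat\ell^*$ of $\min_{y\in S}\ell_y$ up to error $\tO{1/\sqrt{\rho'}}$. Second, invoke \recurgap: for $j=1,2,\ldots,O(\log N)$, query the gap $\ell_y(X)-\hat\ell^*$ for each currently surviving candidate with a budget that doubles each phase, so the noise scale satisfies $\sigma_j = \Theta(\sqrt{N/(\rho'\cdot 2^j)})$, and eliminate every candidate whose noisy gap exceeds a threshold of order $\sigma_j \sqrt{\log N}$. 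By Gaussian concentration, a candidate whose true gap is much larger than $\sigma_j\sqrt{\log N}$ is eliminated by the end of phase $j$ with high probability. After all phases, the surviving set $S'$ has size at most $\polylog N$ and contains (with high probability) a candidate within $\tO{\log N/\sqrt{\rho'}}$ of the optimum; we then recurse on $S'$ in the next round.

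\textbf{Main obstacle.} The delicate point is bounding the total budget used by \recurgap: a phase's cost is the number of survivors times the per-query budget, and a naive accounting blows up by a factor of $N$. The fix is an inductive argument showing that, with high probability, the number of survivors shrinks geometrically across phases, so the cumulative cost over $O(\log N)$ phases telescopes to the allotted $\Theta(\rho')$. The cleanest formulation uses a dyadic union bound, separating candidates by their true gap band $[E_j, 2E_j)$: within each band only a $\log\log N$-type union bound is needed, so the total overhead after summing over bands and phases is only $\polylog(\log N)$. Combining this with the outer $T$-round union bound, the \bintree error in the first step of each round, and the conversion from high-probability to expected-error bounds yields the claimed $\tO{\log|\calY|\cdot(\log\log|\calY|)^{O(1)}/\sqrt\rho}$ expected error; tightening the exponent to exactly $11$ then requires a granular bookkeeping of each of these $\polylog(\log|\calY|)$ overheads.
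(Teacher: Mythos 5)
Your proposal takes a fundamentally different route from the paper, and unfortunately I believe the route is broken at exactly the point you flag as the ``main obstacle.''

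The paper never queries per-candidate gaps. Its core move is: (a) sample many random subsets $\calS_t \subseteq \calY$ at random dyadic rates $2^{K-k_t}$, (b) define a scalar loss $\tell_t$ for each subset that rewards a large \emph{gap} within $\calS_t$ and a small minimum, (c) recursively run Selection \emph{on the subsets} (not on candidates) to find a subset with a large gap, and (d) run \bintree on that one gap subset, where it provably returns the exact minimum. The recursion shrinks the problem from $|\calY| = 2^K$ to $T = O(2^{3\sqrt K})$ subsets, i.e.\ $\log$ drops by a square root each level, so the depth is $O(\log\log|\calY|)$. At no point is the budget spread over $\Omega(|\calY|)$ queries. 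Your proposal, by contrast, is a successive-elimination scheme where every surviving candidate is queried individually in every phase.

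The budget argument does not close. Consider the instance the paper itself calls the ``worst case'': for each $i \le K$, about $2^i$ candidates have loss within $\Theta(i\cdot\xi)$ of the optimum, so the losses span a range of only $O(\log N)$. In phase $j$ your per-query noise is $\sigma_j = \Theta\bigl(\sqrt{N/(\rho' 2^j)}\bigr)$ and you eliminate only candidates whose true gap exceeds $\Theta(\sigma_j\sqrt{\log N})$. Until $\sigma_j \lesssim \sqrt{\log N}$, i.e.\ $2^j \gtrsim N/\log N$, the elimination threshold exceeds the entire loss range and \emph{nobody} is eliminated. So $n_j = N$ for the first $\approx \log N - \log\log N$ phases, and the cumulative budget $\sum_j n_j\cdot \rho' 2^j/N \approx \sum_{j \le \log N - \log\log N} \rho' 2^j = \Theta(\rho' N/\log N)$ already blows up by a polynomial factor. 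Equivalently, in a mean-estimation view, eliminating a candidate with gap $\Delta_y$ requires budget $\Omega(1/\Delta_y^2)$, and $\sum_{y}1/\Delta_y^2 = \Omega(N/\log^2 N)$ for this instance; no dyadic-band union bound can rescue that, because the overhead is per-candidate rather than per-band. The geometric-shrinkage induction you invoke is false precisely for the dense instances that are the hard case for Selection. The missing idea is the paper's pivot from ``eliminate individual candidates'' to ``manufacture and then detect gap subsets via subsampling,'' which avoids ever paying a per-candidate budget.
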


At a high-level, our algorithm exploits the following observation: When there is a large \emph{gap} between the smallest and second-smallest losses, the problem becomes significantly easier. In fact, when the gap is $\tO{\sqrt{\log |\calY|}/\sqrt{\rho}}$, Steinke's algorithm can already find the optimal index with high probability. Given this, our algorithm then proceeds in three steps. First, use subsampling to produce multiple instances (with different sampling rates). Secondly, recursively solve Selection on these instances but with the loss set in such a way that we prefer an instance with a large gap. Finally, we run Steinke's algorithm on the instance with large gap. The main challenge in the analysis is to ensure that, by carefully selecting the different sampling rates, at least one of the subsampled instances has both a large gap and a small optimal loss.

\subsection{Discussion and Open Questions}

An interesting open question here is to remove the lower order term (of $(\log \log |\calY|)^{O(1)}$) and completely achieve $O\left(\frac{1}{\sqrt{\rho}} \cdot \log |\calY|\right)$ error, which would exactly match the guarantee from the Exponential Mechanism. We remark that below we did not try to optimize the exponent of 11 in our $(\log \log |\calY|)^{O(1)}$ term, and indeed we suspect that it might be possible to make exponent arbitrarily close to two by slightly modifying the algorithm and the proof. Nevertheless, we do not see how the current technique can get rid of such a term completely. 

Another interesting direction is to consider the model with \emph{Laplace} mechanism, instead of Gaussian mechanism. Steinke's mechanism~\cite{DPorg-open-problem-selection} gives an error of $O_{\eps}(\log^2 |\calY|)$. Unfortunately, our technique does not seem to achieve any improvement in this setting since Laplace noise is not as concentrated as Gaussian; thus, one needs a much larger gap when working with Laplace noise.

Lastly, the algorithm we describe can only be implemented to solve Selection when the candidate set $|\calY|$ is finite. Meanwhile, the exponential mechanism can be well-defined on probability spaces with infinite sample spaces. It would be interesting to discover ways to use low-sensitivity queries to implement Selection in this setting. We believe it would require fundamentally different uses of the low-sensitivity queries, as opposed to the tree-based method used here and in ~\cite{DPorg-open-problem-selection}.

\section{Preliminaries}

Throughout, we use $\log$ to denote base-2 logarithm.

The sensitivity of a function $g: \calX^* \to \R$ is $\Delta(g) := \max_{\bX, \bX'} |g(\bX) - g(\bX')|$ where the maximum is over all neighboring inputs $\bX, \bX'$. All results stated below work for any neighboring notion.



\subsection{The Binary Tree Algorithm}
\label{sec:prelim-bintree}

We recall the ``binary tree'' algorithm proposed in \cite{DPorg-open-problem-selection}. Roughly speaking, the algorithm constructs a balanced full binary tree with the candidates being the leaves. The algorithm then starts at the root and, at each step, uses a single low-sensitivity query to determine whether the minimum value of the left subtree or the right subtree is larger; it then traverses accordingly. The algorithm is described more formally in \Cref{alg:binary-tree}. The guarantee shown in \cite{DPorg-open-problem-selection} is as follows:

\begin{lemma}[\cite{DPorg-open-problem-selection}] \label{lem:bin-search-exp}
For any $\rho > 0$, \Cref{alg:binary-tree} solves Selection with expected error $O\left(\frac{\log^{3/2} |\calY|}{\sqrt{\rho}}\right)$. 
\end{lemma}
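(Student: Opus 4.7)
The plan is a per-level error decomposition combined with the standard Gaussian tail bound. First I pin down the noise scale: after padding $\calY$ with dummy infinite-loss candidates, the tree has depth $d = \log |\calY|$, and the algorithm issues one constant-sensitivity query per level with budget $\rho/d$ each; composition yields the overall $\rho$-zCDP guarantee, while each query's noise is Gaussian with standard deviation $\sigma = \Theta(\sqrt{\log|\calY|/\rho})$.

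Next I set up a telescoping identity. Let $v_0,\ldots,v_d$ be the random root-to-leaf path taken by the algorithm and let $\mu(v) := \min_{y \in \mathrm{leaves}(v)}\ell_y$ be the true subtree-minimum. The final error equals $\mu(v_d) - \mu(v_0)$, which telescopes as
\begin{align*}
\ell_{y_{\out}} - \min_{y} \ell_y \;=\; \sum_{i=0}^{d-1} \mathbf{1}[\text{wrong turn at } v_i] \cdot \bigl|\mu(v_i^L) - \mu(v_i^R)\bigr|,
\end{align*}
because a correct turn contributes nothing, whereas a wrong turn from $v_i$ bumps $\mu$ up by exactly the gap between its two children's subtree-minima. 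The main technical step is to bound each summand in expectation. Condition on $v_i$, which freezes the gap $\Delta := |\mu(v_i^L)-\mu(v_i^R)|$; the noise $Z_i$ used to decide at $v_i$ is an independent $\calN(0,\sigma^2)$, so the wrong-turn probability is at most $\Pr[|Z_i|\geq \Delta] \leq \exp(-\Omega(\Delta^2/\sigma^2))$ by the Gaussian tail. Hence the conditional expected contribution at level $i$ is at most
\begin{align*}
\Delta \cdot \exp\bigl(-\Omega(\Delta^2/\sigma^2)\bigr) \;\leq\; \sup_{t\geq 0}\, t\cdot\exp(-\Omega(t^2/\sigma^2)) \;=\; O(\sigma),
\end{align*}
since $t \mapsto t \exp(-ct^2/\sigma^2)$ is maximized at $t = \Theta(\sigma)$. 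Taking the tower expectation over $v_i$ and summing across the $d = \log|\calY|$ levels yields $\E[\ell_{y_\out} - \min_y \ell_y] \leq d \cdot O(\sigma) = O(\log^{3/2}|\calY|/\sqrt{\rho})$.

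I do not foresee a real obstacle here: the only mildly delicate point is that the path $v_i$ is itself random and correlated with the noises used at earlier levels, but this is handled cleanly by conditioning level by level and using that $Z_i$ is independent of the randomness generated before level $i$. Everything else—the telescoping, the Gaussian tail bound, and the supremum computation—is routine.
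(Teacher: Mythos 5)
The paper states this lemma as a citation to Steinke's work and provides no proof of its own, so there is no in-paper argument to compare against; the closest analogue is the base-case bound in the proof of \Cref{thm:main}, which uses a union bound over all $K$ levels to get a high-probability guarantee (then one would integrate over $\beta$ for expectation), rather than working directly in expectation as you do. Your argument is correct: the telescoping identity is exact (a correct turn preserves the subtree minimum $\mu$, a wrong turn bumps it up by precisely the inter-child gap $\Delta$), the independence of $Z_i$ from the path prefix $v_0,\dots,v_i$ licenses the level-by-level conditioning, and absorbing $\Delta$ into the tail via $\sup_{t\ge 0} t\,e^{-ct^2/\sigma^2} = O(\sigma)$ correctly yields an $O(\sigma)$ per-level expected contribution, whence $O(d\sigma) = O(\log^{3/2}|\calY|/\sqrt{\rho})$. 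A slightly more elementary route to the same bound, which avoids the supremum computation entirely: a wrong turn at level $i$ occurs only if the noise overwhelms the signal, i.e. $|Z_i| \ge \Delta_i/2$, so pointwise $\Delta_i \cdot \mathbf{1}[\text{wrong at }i] \le 2|Z_i|$ and hence $\E[\text{error}] \le 2\sum_i \E|Z_i| = O(K\sigma)$. Both work; your version is a clean and standard way to see it.
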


\begin{algorithm}[h!]
\caption{\bintree$\left((\ell_y)_{y \in \calY}; \rho\right)$~\cite{DPorg-open-problem-selection}}
\label{alg:binary-tree}
\begin{algorithmic}[1]
\Require{Loss functions $(\ell_y)_{y \in \calY}$,~} \textbf{Parameter: } Privacy budget $\rho$
\Ensure{Index $y_{\out} \in \calY$}
\State{$\calC \gets \calY$} \Comment{$\calC$ is the current candidate set}
\State{$K \gets \lceil \log |\calY| \rceil$}
\While{$|\calC| > 1$}
\State Partition $\calC$ into $\calC_1 \cup \calC_2$ where the sizes of $\calC_1, \calC_2$ differ by at most one.
\State Query $q := \frac{1}{2}\left(\min_{y_1 \in \calC_1} \ell_{y_1}(X) - \min_{y_2 \in \calC_2} \ell_{y_2}(X)\right)$ with budget $\frac{\rho}{K}$; let the answer be $\tilde{q}$
\If{$\tq > 0$}
\State $\calC \gets \calC_2$
\Else
\State $\calC \gets \calC_1$
\EndIf
\EndWhile
\State \Return the only index in $\calC$
\end{algorithmic}
\end{algorithm}

\section{Nearly Optimal Algorithm for Selection}

The main result of our work is an improvement over Steinke's algorithm (\Cref{lem:bin-search-exp}). Namely, we devise an algorithm with expected error $\tO{\frac{1}{\sqrt{\rho}} \cdot \log |\calY|}$, as stated formally below.

\begin{theorem}[Main Theorem] \label{thm:main-exp}
For any $\rho > 0$, \Cref{alg:combined} solves Selection with expected error $O\left(\frac{1}{\sqrt{\rho}} \cdot \log |\calY| \cdot (\log \log |\calY|)^{11}\right)$.
\end{theorem}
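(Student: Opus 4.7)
The plan is to exploit the ``easy case'' of the binary-tree algorithm: when the smallest loss is separated from the rest by a gap of order $G = \tilde{\Omega}(\sqrt{\log N}/\sqrt{\rho})$ (where $N := |\calY|$), \Cref{alg:binary-tree} identifies the optimal candidate with high probability, so its error drops from $\tilde{O}(\log^{3/2} N/\sqrt{\rho})$ to $\tilde{O}(\log N/\sqrt{\rho})$. Following the three-step outline in the introduction, I would (i) create $M=O(\log N)$ subsampled sub-instances with geometrically-spaced inclusion rates $p_j = 2^{-j}$, (ii) recursively invoke our algorithm on these sub-instances to select a ``well-separated'' one, and (iii) run \bintree on the chosen sub-instance with the budget reserved for this step.

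The heart of the argument is a structural lemma asserting that some sub-instance $S_{j^\star}$ is simultaneously \emph{accurate} (its minimum loss is within $\tilde{O}(\log N/\sqrt{\rho})$ of the global minimum) and \emph{well-separated} (its second-smallest loss exceeds its smallest by at least $G$). To prove this I would analyze the sorted-loss profile $\ell_{(1)}\le\cdots\le\ell_{(N)}$ and let $k^\star$ be the largest rank whose loss lies within $\tilde{O}(\log N/\sqrt{\rho})$ of $\ell_{(1)}$. If $k^\star$ is small (say $O(1)$), a dense rate already works; otherwise, the telescoping identity $\ell_{(k^\star)}-\ell_{(1)} = \sum_{j}(\ell_{(2^{j+1})}-\ell_{(2^j)})$ over the $O(\log N)$ doubling scales forces, by averaging, at least one doubling scale to exhibit a gap of $\tilde{\Omega}(\sqrt{\log N}/\sqrt{\rho})$. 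Subsampling at rate $p_{j^\star}=\Theta(2^{-j^\star})$ then captures one of the top $2^{j^\star}$ candidates with constant probability while any additional sampled candidates land on the far side of the gap. Taking $\Theta(\log\log N)$ independent repetitions at each rate amplifies ``constant probability'' to ``high probability'' over $j$, contributing one factor of $\log\log N$ to the final overhead.

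To realize step (ii) without knowing $j^\star$ in advance, I would attach to each sub-instance $S_j$ a sensitivity-$1$ surrogate loss $L_j$ proxying ``well-separatedness''---for concreteness, $L_j$ equals the smallest loss observed in $S_j$ minus a clipped reward for the observed gap, with clipping essential to keep $\Delta(L_j)=1$. The recursive selector is then applied to these $O(M\log\log N)$ sub-instances with losses $L_j$. Writing $T(N)$ for the expected error of the whole procedure, the construction yields a recurrence of the form $T(N) \le T(\tilde{O}(\log N)) + C\cdot \log N\cdot (\log\log N)^c/\sqrt{\rho}$ for some fixed $c$ absorbing the repetition amplification and the outer selection. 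Since $N \mapsto \tilde{O}(\log N)$ collapses doubly-exponentially, the top-level term dominates and the recurrence solves to $T(N)=O(\log N\cdot(\log\log N)^{c'}/\sqrt{\rho})$; chasing explicit constants through yields the claimed exponent of $11$.

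The main obstacle is the \emph{coupling} between accuracy and separation in a single sub-instance. Separately, it is easy to find a $j$ whose sampled minimum is near-optimal (take a small enough rate) and easy to find a $j$ whose sampled losses exhibit a large gap (take the averaging argument on doubling scales), but forcing both properties on the same $j$---and then designing a sensitivity-$1$ surrogate $L_j$ faithful enough for the recursive selector to steer toward this $j^\star$ using only Gaussian-noised queries---is the delicate step. This coupling is where the extra $(\log\log N)^{O(1)}$ factor seems unavoidable in the present framework, a point acknowledged in the paper's own discussion.
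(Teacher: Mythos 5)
Your proposal is really sketching the \emph{high-probability} guarantee (the paper's Theorem~\ref{thm:main}, proved for \recurgap), not the expected-error statement of Theorem~\ref{thm:main-exp} whose proof you were asked to supply. The paper's proof of Theorem~\ref{thm:main-exp} is quite different from what you describe: it runs \recurgap and \bintree in parallel on the \emph{full} instance, compares the two outputs with one additional noisy query, and returns the better one (Algorithm~\ref{alg:combined}). This wrapper is not optional. Your recurrence $T(N) \le T(\tilde O(\log N)) + C\log N (\log\log N)^c/\sqrt{\rho}$ implicitly assumes the recursive selection always lands on a near-optimal sub-instance, but with some probability $\beta$ the subsampling produces no good sub-instance (or the recursion errs), in which case \bintree on $\calS_{t_\out}$ has no useful bound because $\calS_{t_\out}$ may not contain any near-optimal candidate. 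Without an unconditional fallback on the full $\calY$, the conditional expectation on the failure event is unbounded and the recurrence does not close.

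The core structural argument you describe also has two genuine errors. First, the averaging claim is wrong: if $\ell_{(k^\star)}-\ell_{(1)}=\tilde O(\log N/\sqrt{\rho})$ and you telescope over $O(\log N)$ doubling scales, averaging gives you a scale with gap $\tilde\Omega(1/\sqrt{\rho})$, not $\tilde\Omega(\sqrt{\log N}/\sqrt{\rho})$. The paper instead sets $i^* := \argmin_i \bigl(\ell_{(2^i)} - i\cdot\xi\bigr)$ and shows this forces a gap of $\sqrt{K}\cdot\xi$ spanning the $\sqrt{K}$ scales from $2^{i^*}$ to $2^{k^*}$ with $k^*=i^*+\lceil\sqrt{K}\rceil$, while also ensuring $\ell_{(2^{i^*})}-\ell_{(1)}\le K\xi$; no averaging step will produce this coupled pair. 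Second, the success probability of a single subsample is not constant: to get a subsample that simultaneously contains exactly one element of ranks $1,\dots,2^{i^*}$ and none of ranks $2^{i^*}+1,\dots,2^{k^*}$, one needs rate $\approx 2^{-k^*}$, and the probability of including an element from the top $2^{i^*}$ is then only $\approx 2^{i^*-k^*}=2^{-\sqrt{K}}$. Consequently $\Theta(\log\log N)$ repetitions per scale is far too few; the paper uses $T = \bigl\lceil 2^{3\sqrt{K}-1}\bigr\rceil$ random subsets precisely to amplify this $2^{-\Theta(\sqrt{K})}$ success probability. Your sub-instance count ($O(\log N\cdot\log\log N)$) is off by an exponential-in-$\sqrt{\log N}$ factor, which would break the entire recursion.

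Finally, a smaller note on your surrogate loss $L_j$: the paper uses $\tell_t = \tfrac12\max\{\min_{y\in\calS_t}\ell_y - \ell_{y^*} - (K+\sqrt{K})\xi,\ -\gap(\calS_t)\}$, a max of two terms each with sensitivity $1$; taking a max preserves sensitivity, whereas ``minimum loss minus a clipped gap reward'' adds two sensitivities and needs its own verification. The max form also makes the accuracy/separation coupling clean: focusing on the first term of the max bounds the minimum loss of the chosen subset, and focusing on the second term bounds its gap, exactly the two facts needed to invoke Lemma~\ref{lem:gap-bin-search}.
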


The remainder of the section is devoted to the proof of \Cref{thm:main-exp}.

\subsection{Binary Tree, Revisited: Gap Instances Are Easy}

We start by observing that, in many cases, the binary tree algorithm is already doing much better than \Cref{lem:bin-search-exp} is suggesting. In particular, when there is a ``gap'' between the smallest and second smallest loss values, the algorithm already does well. To state this, let us first formalize the ``gap''.

\begin{definition}[Minimum Index]
We say that $y^*$ is a \emph{minimum index} of $(\ell_y)_{y \in \calY}$ if $\ell_{y^*}(X) = \min_{y \in \calY} \ell_y(X)$.
\end{definition}
\begin{definition}[Gap] \label{def:gap}
For any non-empty set $\calY$, the \emph{gap} of an instance $(\ell_y)_{y \in \calY}$ is defined as
$$\gap\left((\ell_y)_{y \in \calY}\right) := \min_{y \in \calY \setminus \{y^*\}} \ell_y(X) - \ell_{y^*}(X),$$
where $y^*$ is any minimum index of $\calY$. When $|\calY| = 1$, we define the gap to be $\infty$.
\end{definition}

Our observation here is that if an instance have gap at least $\tilde{\Theta}\left(\sqrt{\log |\calY|} / \sqrt{\rho}\right)$, then we output the minimum index w.h.p., as stated more formally below.

\begin{lemma}[Binary Tree for Gap Instances] \label{lem:gap-bin-search}
For any $\rho > 0, \beta \in (0, 0.5)$, let $\tau(|\calY|, \rho, \beta) := \frac{2\sqrt{\lceil \log|\calY| \rceil \cdot \left(\log\left(\frac{\lceil \log|\calY| \rceil}{\beta}\right)\right)}}{\sqrt{\rho}}$. 
If an instance $(\ell_y)_{y \in \calY}$ has gap at least $\tau(|\calY|, \rho, \beta)$, then \Cref{alg:binary-tree} outputs the minimum index $y^*$ with probability at least $1 - \beta$.
\end{lemma}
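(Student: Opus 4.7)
The plan is to maintain the invariant that the minimum index $y^*$ remains in the current candidate set $\calC$ throughout the while-loop, and to show that each of the $K = \lceil \log|\calY|\rceil$ queries preserves this invariant with probability at least $1 - \beta/K$. The first step is to observe that at any iteration with $y^* \in \calC$, when $\calC$ is partitioned into $\calC_1 \cup \calC_2$, exactly one of the two pieces contains $y^*$. Say $y^* \in \calC_1$. Then $\min_{y \in \calC_1} \ell_y(X) = \ell_{y^*}(X)$, while every element of $\calC_2$ has loss at least $\ell_{y^*}(X) + \gap\bigl((\ell_y)_{y \in \calY}\bigr) \geq \ell_{y^*}(X) + \tau$ by the gap hypothesis (and the fact that restricting to a subset can only increase the minimum). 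Hence the true query value satisfies $q \leq -\tau/2$, and the algorithm chooses $\calC_1$ (i.e., the branch containing $y^*$) whenever the noise $Z_i$ satisfies $Z_i \leq \tau/2$. The symmetric argument handles the case $y^* \in \calC_2$, yielding a one-sided Gaussian tail event in either case.

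Next I would bound the probability of this one-sided event using the Gaussian tail inequality. Each query is issued with budget $\rho/K$, so $Z_i \sim \calN(0, K/(2\rho))$. The standard bound gives
\[
\Pr\bigl[|Z_i| > \tau/2\bigr] \;\leq\; \exp\!\left(-\frac{(\tau/2)^2}{2 \cdot K/(2\rho)}\right) \;=\; \exp\!\left(-\frac{\rho \tau^2}{4K}\right).
\]
Plugging in $\tau = \tau(|\calY|, \rho, \beta) = \frac{2\sqrt{K \log(K/\beta)}}{\sqrt{\rho}}$ makes the exponent equal to $-\log(K/\beta)$, so the probability of an ``unsafe'' noise draw at each step is at most $\beta/K$.

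Finally, I would apply a union bound over the at most $K$ queries made along the root-to-leaf path. With probability at least $1 - K \cdot (\beta/K) = 1 - \beta$, every query preserves the invariant $y^* \in \calC$, so in particular the final singleton $\calC$ contains $y^*$, and the algorithm outputs $y^*$. There is no real obstacle here: the argument is a clean combination of ``gap implies large signal-to-noise ratio per query'' with a union bound, and the threshold $\tau$ has been calibrated precisely so that the per-query failure probability absorbs the $K$-fold union bound. The only subtlety worth stating carefully is that the invariant argument requires the \emph{gap as defined on the entire $\calY$} to lower bound the separation between $\min_{y \in \calC_1} \ell_y$ and $\min_{y \in \calC_2} \ell_y$ at every level of the recursion, which is immediate since any superset of $\{y^*\}$ contains the true minimum while any set disjoint from $y^*$ has minimum loss $\geq \ell_{y^*} + \gap$.
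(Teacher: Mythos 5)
Your proof is correct and takes essentially the same approach as the paper: the paper's proof is a terse two-sentence version (``by standard concentration of Gaussian and the union bound, $|\tilde q - q| < \tau/2$ for all rounds with probability $\geq 1-\beta$; when this holds, the gap forces every query to keep the branch containing $y^*$''), and your write-up simply spells out the same invariant-plus-union-bound argument in detail, including the observation that only a one-sided Gaussian tail is actually needed. The only cosmetic nit is that you write $\Pr[|Z_i| > \tau/2]$ but then apply the one-sided tail bound without the factor of~$2$; since you correctly note that the relevant event is one-sided, the displayed computation is the right one and no correction is needed.
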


\begin{proof}
By standard concentration of Gaussian and the union bound, we have that $|\tilde{q} - q| < \frac{1}{\sqrt{(\rho/K)}} \cdot \sqrt{\log(K / \beta)} = \frac{\tau(|\calY|, \rho, \beta)}{2}$ for all rounds with probability at least $1 - \beta$. When this holds, since the instance has gap at least $\tau(|\calY|, \rho, \beta)$, we will always pick $\calC_i$ that contains the minimum index $y^*$; that is, we output $y^*$ in the end.
\end{proof}

\subsection{Our New Recursive Algorithm and Its Guarantee}

In this subsection, we present our main contribution, which is an algorithm that with high probability achieves an error bound of $\tO{\frac{1}{\sqrt{\rho}} \cdot \log |\calY|}$. The exact bound is stated below.

\begin{theorem}[Main Theorem] \label{thm:main}
For any $\rho > 0, \beta \in (0, 0.001)$, \Cref{alg:final} solves Selection with error $O\left(\frac{1}{\sqrt{\rho}} \cdot \log |\calY| \cdot (\log \log |\calY|)^{10} \cdot \log\left(\frac{\log |\calY|}{\beta}\right)\right)$ with probability at least $1 - \beta$.
\end{theorem}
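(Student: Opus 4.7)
The strategy is the three-phase scheme advertised in the introduction. Split the privacy budget $\rho$ into three roughly equal portions $\rho_1, \rho_2, \rho_3$. Using $\rho_1$, construct $m$ subsampled copies $I_1, \ldots, I_m$ of $\calY$ by independent Bernoulli subsampling at the $\lceil \log |\calY|\rceil + 1$ geometrically-spaced rates $p \in \{2^0, 2^{-1}, \ldots, 2^{-\lceil \log|\calY|\rceil}\}$, repeating each rate $\Theta(\log(1/\beta))$ times to amplify success probability. With $\rho_2$, recursively invoke Selection on a meta-instance over $[m]$ whose losses are designed so that the winning index $i^*$ almost surely satisfies both $\min_{y \in I_{i^*}} \ell_y(X) \approx \ell_{y^*}(X)$ and a large gap condition. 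Finally, with $\rho_3$, run \Cref{alg:binary-tree} on the chosen subsample $I_{i^*}$ and invoke \Cref{lem:gap-bin-search} to recover the true minimizer $y^*$.

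\textbf{Key lemma: subsampling creates a gap.} I would first prove that, for any instance, at least one rate $p$ on the dyadic grid yields a subsample which simultaneously (i) contains $y^*$ and (ii) has gap at least a target $\tau = \Theta\!\left(\sqrt{\log|\calY| \cdot \log(\log|\calY|/\beta)}/\sqrt{\rho}\right)$ compatible with \Cref{lem:gap-bin-search}. The argument is a pigeonhole over \emph{loss layers}: define $L_k := \{y \in \calY : \ell_y(X) - \ell_{y^*}(X) \in [k\tau, (k+1)\tau)\}$ and $N_k := |L_0 \cup \cdots \cup L_k|$. The sequence $(N_k)$ is nondecreasing from $1$ to at most $|\calY|$, so there is a ``plateau index'' $k^*$ where the geometric growth $N_{k^*+1}/N_{k^*}$ is at most $|\calY|^{1/T}$ with $T = \Theta(\log\log|\calY|)$. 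For the dyadic rate $p \approx 1/N_{k^*}$, a direct second-moment computation shows that with constant probability $y^*$ survives while no competitor from layers $L_1, \ldots, L_{k^*}$ does, producing the desired $\tau$-gap. Repeating $\Theta(\log(1/\beta))$ times at each rate boosts this to probability $1 - \beta/\polylog|\calY|$. Each meta-loss $L_i$ is then constructed as a sensitivity-$1$ surrogate combining the in-sample minimum $\min_{y \in I_i} \ell_y(X)$ with a penalty derived from a single noisy gap test (charged to $\rho_1$) that rules out subsamples clearly failing the gap condition.

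\textbf{Recursion and the main obstacle.} Because the meta-instance has only $m = O(\log|\calY| \cdot \log(1/\beta))$ candidates, recursively applying the same algorithm drives the candidate count down through $|\calY| \to m \to \polylog m \to \cdots$, with depth $O(\log^* |\calY|) \leq O(\log\log|\calY|)$. At each recursion level, the additive error is $\tau$ times a $\polylog\log|\calY|$ factor, which accumulates across levels to $\tau \cdot (\log\log|\calY|)^{O(1)}$; plugging the per-level failure probability of $\beta/\Theta(\log\log|\calY|)$ into the definition of $\tau$ injects the final $\log(\log|\calY|/\beta)$ factor and matches the claimed bound $O((\log|\calY|)(\log\log|\calY|)^{10}\log(\log|\calY|/\beta)/\sqrt{\rho})$. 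I expect the main difficulty to be the gap-creation lemma: ensuring simultaneously that $y^*$ survives \emph{and} that the next-best survivor is a full $\tau$ away, using only the fixed geometric grid of rates and without prior knowledge of the loss distribution. The layer-counting pigeonhole is the right tool, but the parameter $T$ must be carefully tuned to trade off gap magnitude (which wants small $T$) against grid density (which wants large $T$), and this tuning is precisely the source of the $(\log\log|\calY|)^{O(1)}$ overhead.
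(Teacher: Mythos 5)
Your high-level plan matches the paper's (subsample at geometrically spaced rates, recursively pick a subsample with a good gap and small minimum, then finish with \bintree{} via \Cref{lem:gap-bin-search}), and your intuition about what makes the problem hard is correct. But the quantitative heart of the argument---the ``gap-creation'' lemma---does not go through as stated, and the parameters you choose are off by more than constants.

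\textbf{The per-trial success probability is not constant.} You propose to find a plateau index $k^*$ with $N_{k^*+1}/N_{k^*} \le |\calY|^{1/T}$ and to subsample at rate $p \approx 1/N_{k^*}$, claiming that ``with constant probability $y^*$ survives while no competitor from $L_1,\dots,L_{k^*}$ does.'' But $\Pr[y^*\text{ survives}]=p\approx 1/N_{k^*}$, and your pigeonhole bounds only the \emph{ratio} $N_{k^*+1}/N_{k^*}$, not $N_{k^*}$ itself, which can be as large as $|\calY|^{1-1/T}$. So the success probability per trial can be as small as roughly $1/|\calY|$. Replacing ``$y^*$ survives'' with ``some member of $L_0\cup\dots\cup L_{k^*}$ survives'' does not fix this, because the unique survivor can land in $L_{k^*}$, where the gap to $L_{k^*+1}$ can be zero; controlling that requires bounding $N_{k^*}/N_{k^*-1}$, which your pigeonhole also does not give. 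As a consequence, $m = O(\log|\calY|\cdot\log(1/\beta))$ subsamples is exponentially too few, and the recursion $|\calY|\to\polylog|\calY|$ you describe does not exist.

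\textbf{What the paper does instead.} The paper's \Cref{lem:prob-good-score-one-subset} does not use layers of fixed width $\tau$ counted by a $\Theta(\log\log|\calY|)$-bin pigeonhole. Instead, with $K=\lceil\log|\calY|\rceil$ and the candidates sorted by loss, it picks $i^* := \argmin_{i\in\{0,\dots,K\}}\bigl(\ell_{y^{(2^i)}}-i\xi\bigr)$ and $k^*:=i^*+\lceil\sqrt{K}\rceil$. This single argmin simultaneously certifies that $\ell_{y^{(2^{i^*})}}-\ell_{y^*}\le K\xi$ (small minimum) and $\ell_{y^{(2^{k^*})}}-\ell_{y^{(2^{i^*})}}\ge\sqrt{K}\xi$ (large gap), and crucially the two ``cutoffs'' are only $\sqrt{K}$ apart on the log scale, so the probability of drawing exactly one of the first $2^{i^*}$ elements while avoiding the next $2^{k^*}-2^{i^*}$ is at least $2^{-2\sqrt{K}}$. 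The algorithm therefore draws $T=\lceil 2^{3\sqrt{K}-1}\rceil$ subsamples, not polylogarithmically many, and the recursion shrinks the log-size from $K$ to $\lceil\log T\rceil\le 3\sqrt{K}$, giving depth $O(\log\log K)$ (not $\log^*$). This is exactly the ``tuning of $T$'' that you anticipate as the main obstacle; the resolution is that the per-trial probability must be allowed to be as small as $2^{-\Theta(\sqrt{\log|\calY|})}$, with a correspondingly larger meta-instance, and the polyloglog factor comes from iterating the square-root shrinkage, not from a $\Theta(\log\log|\calY|)$ pigeonhole at a single level.

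Your construction of the surrogate loss $\tell_t$ as a combination of the in-sample minimum and a gap penalty is in the same spirit as \Cref{line:loss-recur-depth} of \Cref{alg:final}, so that part of the plan is on the right track; the missing ingredient is the $i^*$ argmin trick and the exponentially larger $T$.
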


Before we present the algorithm, let us describe the intuition behind the algorithm. The high-level description of the algorithm is very simple: (i) subsampling ``many'' subsets and (ii) recursively call the algorithm to select the subset with (approximately) largest gap, and then (iii) call \bintree~(using the guarantee in \Cref{lem:gap-bin-search}). The main argument is regarding how should we sample and how ``many'' subsets do we need to get a gap of $\tO{\sqrt{\log |\calY|}}$ as required in \Cref{lem:gap-bin-search}. For convenience of discussion, suppose that $|\calY| = 2^K$ where $K \in \mathbb{N}$ is a perfect square. Intuitively, a ``worst'' case here is the following: For all $i \in [K]$, there are $2^i$ candidates whose losses are within $\tO{i}$ of the optimal. The reason here is that, if the losses are ``more packed'', then all candidates would result in an error less than $\tO{K}$, which would satisfy our desired bound. On the other hand, if the losses are ``less packed'', then the gaps are larger and thus it is easier for our steps (i) and (ii). Now, coming back to this instance, if we sample $2^{K - \sqrt{K}}$ candidates from $\calY$, then with probability $\Theta(2^{-\sqrt{K}})$, we will select the optimal candidate and none of the other $2^{\sqrt{K}}$ candidates that are within $\tO{\sqrt{K}}$-close to it. Such a set yields $\tO{\sqrt{K}}$ gap required for \Cref{lem:gap-bin-search}. Moreover, since the probability of sampling such a subset is $\Theta(2^{-\sqrt{K}})$, it suffices for us to sample $O(2^{\sqrt{K}})$ such sets. This means that, in the recursive step (ii), the ``error'' in the selection is $\tO{\sqrt{K}}$ which can be absorbed by the gap if we select the parameters carefully. 

While the above presents the overview for the ``worst'' instance, in general the instance can be a lot more complicated; e.g. for some $i \in [K]$, there might be more or fewer than $2^i$ candidates whose losses are within $\tO{i}$. For example, if there are many more candidates (say $2^{\omega(\sqrt{K})}$) whose loss are very close to the optimum, then sampling as above will not be sufficient. Thus, in the main algorithm, we also randomly sample the sampling probability (i.e. subset size) by taking them to be a random power of two. We can show that, at least one such power of two results in a large gap with probability at least $2^{-\Omega(\sqrt{K})}$ (\Cref{lem:prob-good-score-one-subset}). We also need to incorporate the minimum loss value in the subset into the loss function in the recursive step (ii) (see \Cref{line:loss-recur-depth} in \Cref{alg:final}); this is to prevent us from selecting a subset with a large gap but with a large optimal loss value. Our entire algorithm is given in \Cref{alg:final}.

\begin{algorithm}[h!]
\caption{\recurgap$\left((\ell_y)_{y \in \calY}; \rho, \beta\right)$}
\label{alg:final}
\begin{algorithmic}[1]
\Require{Loss functions $(\ell_y)_{y \in \calY}$,~} \textbf{Parameter: } Privacy budget $\rho$, failure probability $\beta$
\Ensure{Index $y_{\out} \in \calY$}
\If{$|\calY| \leq 2^{1000}$ or $\beta \leq 2^{-K}$}
\State \Return \bintree$\left((\ell_y)_{y \in \calY}; \rho\right)$
\EndIf
\State $K \gets \lceil \log |\calY| \rceil$
\State $T \gets \left\lceil 2^{3\sqrt{K}-1}\right\rceil$.
\State $\xi(K, \rho, \beta) \gets \frac{1000}{\sqrt{\rho}} (1 + \log K)^{10} \log(1000(K+1)/\beta)$ 
\label{line:gap-def}
\State $y^* \gets$ minimum index of $\calY$ \Comment{Tie broken arbitrarily}
\For{$t \in [T]$}
\State Sample uniformly $k_t \sim [K]$.
\State $\calS_t \gets$ a random subset of $\calY$ of size $2^{K - k_t}$.
\State $\tell_t \gets \frac{1}{2} \cdot \max\left\{\min_{y \in \calS_t} \ell_y - \ell_{y^*} - (K + \sqrt{K}) \cdot \xi(K, \rho, \beta), -\gap\left((\ell_y)_{y \in \calS_t}\right)\right\}$ \label{line:loss-recur-depth} 
\EndFor
\State $t_{\out} \gets \recurgap\left((\tell_t)_{t \in [T]}; \frac{4\rho}{5}, \frac{4\beta}{5}\right)$
\State $y_{\out} \gets \bintree\left((\ell_y)_{y \in \calS_{t_{\out}}}; \frac{\rho}{5}\right)$
\State \Return $y_{\out}$
\end{algorithmic}
\end{algorithm}

\subsubsection{Proof of \Cref{thm:main}}

We will now formalize the above intuition \Cref{thm:main}.
It is simple to check that the created loss functions $\tell_t$ have sensitivity at most one. As such, our algorithm is valid in the model. We next proceed to argue its accuracy guarantee.

Recall the definition of $\xi$ from \Cref{line:gap-def} of \Cref{alg:final}. Throughout the remainder of the proof, we also let
\begin{align} \label{eq:err-def-high-prob}
\gamma(m, \rho, \beta) = 2 \cdot \lceil \log m \rceil \cdot \xi(\lceil \log m \rceil, \rho, \beta).
\end{align}

We will use the following two inequalities. Here $K$ and $T$ are as defined in \Cref{alg:final}.

\begin{lemma} \label{lem:first-ineq}
If $K \geq 1000$ and $\beta \in (0, 0.001)$, then
$K \cdot \xi\left(K, \rho, \beta\right) + 2\gamma\left(T, \frac{4\rho}{5}, \frac{4\beta}{5}\right) \leq \gamma\left(|\calY|, \rho, \beta\right)$
\end{lemma}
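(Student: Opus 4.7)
The plan is to reduce the claim to a single clean comparison between two $\xi$ values and then check it factor by factor. By definition of $\gamma$ we have $\gamma(|\calY|,\rho,\beta) = 2K\cdot\xi(K,\rho,\beta)$ (using $\lceil\log|\calY|\rceil = K$) and $\gamma(T,4\rho/5,4\beta/5) = 2K'\cdot\xi(K',4\rho/5,4\beta/5)$, where $K' := \lceil\log T\rceil$. So after subtracting $K\xi(K,\rho,\beta)$ from both sides, the lemma is equivalent to
\[
4K'\cdot \xi(K',\tfrac{4\rho}{5},\tfrac{4\beta}{5}) \;\le\; K\cdot \xi(K,\rho,\beta).
\]

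Next I would bound $K'$. From $T = \lceil 2^{3\sqrt{K}-1}\rceil$, a direct estimate gives $K' \le 3\sqrt{K}$ whenever $K \ge 1000$, so in particular $K' \le K$. Now I compare $\xi(K',\tfrac{4\rho}{5},\tfrac{4\beta}{5})$ and $\xi(K,\rho,\beta)$ term by term using the explicit formula on \Cref{line:gap-def}. The $1/\sqrt{\rho}$ prefactor grows by at most $\sqrt{5/4}$. The $\log(1000(\cdot+1)/\cdot)$ factor grows by at most an additive $\log(5/4) + \log\!\bigl((K'+1)/(K+1)\bigr) \le \log(5/4)$ (the second term is $\le 0$ since $K' \le K$), and the denominator $\log(1000(K+1)/\beta) \ge \log(10^6)$ for $\beta \le 0.001$, $K\ge 1000$; so this ratio is at most a small constant (say $2$). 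The crucial shrinking factor is $\bigl((1+\log K')/(1+\log K)\bigr)^{10}$. Since $\log K' \le \log 3 + \tfrac12\log K$, one checks that $(1+\log K')/(1+\log K) \le 3/4$ whenever $K\ge 1000$, so this ratio is at most $(3/4)^{10}$.

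Combining, $\xi(K',\tfrac{4\rho}{5},\tfrac{4\beta}{5}) / \xi(K,\rho,\beta)$ is bounded by an absolute constant that is comfortably less than $1$. The remaining piece, $4K'/K$, is at most $12/\sqrt{K} \le 12/\sqrt{1000}$ under our hypothesis. Multiplying these two bounds shows
\[
\frac{4K'\cdot \xi(K',4\rho/5,4\beta/5)}{K\cdot \xi(K,\rho,\beta)} \;\le\; \frac{12}{\sqrt{K}}\cdot \sqrt{\tfrac{5}{4}}\cdot 2\cdot (3/4)^{10} \;<\; 1,
\]
which is the desired inequality. The whole argument is a bookkeeping exercise; the only mildly delicate point is the $\log(1000(K+1)/\beta)$ factor, which one must verify does not blow up when $\beta$ shrinks to $4\beta/5$ — this is exactly where the hypotheses $K\ge 1000$ and $\beta\le 0.001$ enter, giving a large enough denominator to absorb the $\log(5/4)$ slack. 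The exponent $10$ inside $\xi$ is tuned precisely so that the logarithmic-in-$K$ factor dominates the polynomial $\sqrt{K}$ loss coming from the ratio $K'/K$ at each recursion step.
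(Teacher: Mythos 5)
Your argument is correct and follows essentially the same route as the paper: reduce the claim to a comparison of $4\lceil\log T\rceil\cdot\xi(\lceil\log T\rceil, 4\rho/5, 4\beta/5)$ against $K\cdot\xi(K,\rho,\beta)$, bound $\lceil\log T\rceil\le 3\sqrt{K}$, and verify factor by factor that the $(1+\log(\cdot))^{10}$ term decays by enough to swamp the other changes. The only difference is bookkeeping: the paper isolates the $\xi$-ratio estimate as \Cref{prop:cal-gemini} (showing $\xi(K,\rho,\beta)\ge 36\,\xi(3\sqrt{K},4\rho/5,4\beta/5)$, which it then reuses for \Cref{lem:second-ineq}), whereas you carry out the same term-by-term comparison inline with slightly looser constants.
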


\begin{lemma} \label{lem:second-ineq}
If $K \geq 1000$ and $\beta \in (0, 0.001)$, then
$\sqrt{K} \cdot \xi\left(K, \rho, \beta\right) - 2\gamma\left(T, \frac{4\rho}{5}, \frac{4\beta}{5}\right) \geq \tau\left(|\calY|, \frac{\rho}{5}, \frac{\beta}{10}\right)$
\end{lemma}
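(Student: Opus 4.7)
The inequality reduces to showing $2\gamma(T, \tfrac{4\rho}{5}, \tfrac{4\beta}{5}) + \tau(|\calY|, \tfrac{\rho}{5}, \tfrac{\beta}{10}) \leq \sqrt{K}\xi(K, \rho, \beta)$, and my plan is to bound each of the two left-hand terms as a fraction of $\sqrt{K}\xi(K, \rho, \beta)$: the $\gamma$ term by a constant strictly less than $1/2$, and the $\tau$ term by a negligible amount.

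For the $\gamma$ term, I first observe that $T = \lceil 2^{3\sqrt{K}-1}\rceil \leq 2^{3\sqrt{K}}$ and hence $\lceil \log T \rceil \leq 3\sqrt{K}+1$. Writing out the definitions of $\gamma$ and $\xi$, the quotient $2\gamma(T, \tfrac{4\rho}{5}, \tfrac{4\beta}{5})/(\sqrt{K}\xi(K,\rho,\beta))$ equals
\[\frac{4\lceil\log T\rceil}{\sqrt{K}} \cdot \sqrt{\tfrac{5}{4}} \cdot \frac{(1+\log\lceil\log T\rceil)^{10}}{(1+\log K)^{10}} \cdot \frac{\log(1250(\lceil\log T\rceil+1)/\beta)}{\log(1000(K+1)/\beta)}.\]
For $K \geq 1000$, the first factor is at most $4 \cdot 3.1 = 12.4$ (since $(3\sqrt{K}+1)/\sqrt{K} \leq 3.1$); the second is $\sqrt{5}/2 < 1.12$; and the fourth is at most $1$ via a routine check that $1250(\lceil\log T\rceil+1) \leq 1000(K+1)$. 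The crucial third factor is where the exponent $10$ on $(1+\log K)$ in $\xi$ pays off: writing $L = \log K$, I have $1+\log\lceil\log T\rceil \leq 2.63 + L/2$, and a one-line calculus check shows $(2.63+L/2)/(1+L)$ is decreasing in $L$ with value below $0.7$ at $L = \log 1000$, so the third factor is at most $0.7^{10} < 0.03$. Multiplying, the overall ratio is at most $12.4 \cdot 1.12 \cdot 0.03 < 0.42$.

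For $\tau$, I unfold $\tau(|\calY|, \tfrac{\rho}{5}, \tfrac{\beta}{10}) = 2\sqrt{5K\log(10K/\beta)}/\sqrt{\rho}$ and compare directly to $\sqrt{K}\xi(K,\rho,\beta) = 1000\sqrt{K}(1+\log K)^{10}\log(1000(K+1)/\beta)/\sqrt{\rho}$; the ratio is at most $2\sqrt{5}/(1000(1+\log K)^{10})$ times the harmless factor $\sqrt{\log(10K/\beta)}/\log(1000(K+1)/\beta) \leq 1$, which for $K \geq 1000$ gives at most $2\sqrt{5}/(1000 \cdot 11^{10}) < 10^{-12}$, negligible. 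Summing yields $2\gamma + \tau < 0.42\,\sqrt{K}\xi + 10^{-12}\,\sqrt{K}\xi < \sqrt{K}\xi$, establishing the lemma. The only real obstacle is the constant-bookkeeping in the $\gamma$ step: the polylog savings of roughly $2^{-10}$ from $(1+\log\lceil\log T\rceil)^{10}$ must absorb a $\Theta(1)$ blow-up from the outer $\lceil\log T\rceil/\sqrt{K}$ factor together with the budget rescaling, and this works precisely because the exponent $10$ in the definition of $\xi$ is chosen large enough (and $T \approx 2^{3\sqrt{K}}$ small enough) to accommodate it.
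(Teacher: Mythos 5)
Your proof is correct and follows essentially the same path as the paper's: bound $\lceil \log T\rceil$ by roughly $3\sqrt{K}$, exploit the $(1+\log(\cdot))^{10}$ factor in $\xi$ to show $2\gamma(T,\tfrac{4\rho}{5},\tfrac{4\beta}{5})$ is a fraction of $\sqrt{K}\xi(K,\rho,\beta)$, and then observe that $\tau$ is negligible against what remains. The only difference is organizational: the paper factors the numerical core into a shared helper (Proposition~\ref{prop:cal-gemini}, $\xi(K,\rho,\beta)\ge 36\,\xi(3\sqrt{K},\tfrac{4\rho}{5},\tfrac{4\beta}{5})$) reused in both Lemma~\ref{lem:first-ineq} and Lemma~\ref{lem:second-ineq}, whereas you carry the constant bookkeeping inline; a couple of your rounding steps (e.g.\ using $11^{10}$ as a lower bound when $1+\log_2 1000\approx 10.97$) are slightly off but immaterial given the enormous slack.
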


The proof of these lemmas (which are mostly tedious calculations) are deferred to \Cref{app:ineq-proofs}. For the remainder of the proof, we write $\xi$ as a shorthand for $\xi(K, \rho, \beta)$.

 With these lemmas stated, we are now ready to prove \Cref{thm:main}. 
We prove by induction on $|\calY|$ that \Cref{alg:final} solves Selection with error at most $\gamma(|\calY|, \rho, \beta)$ with probability $1 - \beta$.

\paragraph{Base Case.}
For $|\calY| \leq 2^{1000}$, we simply run \bintree. Similar to the proof of \Cref{lem:gap-bin-search}, with probability $1 - \beta$, the noises added in every step in \Cref{alg:binary-tree} is at most $\sqrt{\frac{K \log (K/\beta)}{\rho}}$. When this is the case, the error is at most $2K \cdot \sqrt{\frac{K \log (K/\beta)}{\rho}}$. For $K \leq 1000$, it is simple to verify that this is at most $\gamma(|\calY|, \rho, \beta)$.

\paragraph{Inductive Step.} 
Assume that the guarantee holds for all $|\calY| < m$ where $m \geq 2^{1000}$. We will show that this also holds when $|\calY| = m$. First, note that if $\beta \leq 2^{-K}$, then we have that $\gamma(|\calY|, \rho, \beta) \geq 2K \cdot \sqrt{\frac{K \log (K/\beta)}{\rho}}$. Thus, as argued above, \Cref{alg:binary-tree} suffices to handle this case. Henceforth, we may assume that $\beta > 2^{-K}$.

We start with the following lemma, which shows that our random sampling procedure results in a ``good'' set with probability at least $2^{-\Omega(\sqrt{K})}$.

\begin{lemma} \label{lem:prob-good-score-one-subset}
For each $t \in [T]$, $\Pr\left[2 \tell_t \leq - \sqrt{K} \cdot \xi\right] \geq \frac{1}{2^{2\sqrt{K}}}.$
\end{lemma}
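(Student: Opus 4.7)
The plan is to lower-bound the probability by exhibiting many pairwise disjoint events, each of which implies that $\calS_t$ both (A) contains a near-optimal candidate and (B) has a large gap. Order the candidates as $y_1 = y^*, y_2, \ldots, y_n$ in non-decreasing loss, write $L_j := \ell_{y_j}(X) - \ell_{y^*}(X)$, and set $W_j := \{i : L_i < L_j + \sqrt{K}\,\xi\}$. For each $j$ with $L_j \leq K\xi$, let $E_j$ be the event that $y_j \in \calS_t$ while $y_i \notin \calS_t$ for every $i \in W_j \setminus \{j\}$. When $E_j$ occurs, every $i<j$ lies in $W_j$, so $y_j$ is the minimum of $\calS_t$ with $L_j \leq K\xi$ (condition A), and every other sampled element has $L \geq L_j + \sqrt{K}\,\xi$, so the gap in $\calS_t$ is at least $\sqrt{K}\,\xi$ (condition B); hence $E_j$ implies $2\tell_t \leq -\sqrt{K}\,\xi$. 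The events $\{E_j\}$ are pairwise disjoint, since each identifies a distinct unique minimum of $\calS_t$.

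Next I would compute $\Pr[E_j \mid k_t = k]$. The subset $\calS_t$ is sampled uniformly at random of size $2^{K-k}$, so $\Pr[E_j \mid k_t = k]$ is a hypergeometric probability that we can lower-bound by $c\cdot p_k(1-p_k)^{|W_j|-1}$, where $p_k := 2^{-k}$ and $c > 0$ is a universal constant (standard comparison of sampling without replacement with independent Bernoulli sampling at the rates of interest). Taking $k_j^\star$ to be the integer in $[K]$ with $p_{k_j^\star} \in [1/(2|W_j|),\,1/|W_j|]$ yields a single-term contribution of $\Omega(1/|W_j|)$. Averaging over the uniform $k_t \in [K]$ and summing the disjoint $E_j$'s gives
\[
\Pr\!\left[2\tell_t \leq -\sqrt{K}\,\xi\right] \;\geq\; \sum_{j:\,L_j \leq K\xi}\Pr[E_j] \;\geq\; \frac{c'}{K}\sum_{j:\,L_j \leq K\xi}\frac{1}{|W_j|}.
\]

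The hard step is lower-bounding $\sum_{j:\,L_j\leq K\xi} 1/|W_j|$. Partition candidates into buckets $B_s := \{i : L_i \in [s\sqrt{K}\,\xi,\, (s+1)\sqrt{K}\,\xi)\}$ with cumulative counts $a_s := |\{i : L_i < s\sqrt{K}\,\xi\}|$; these satisfy $a_1 \geq 1$ and $a_{\sqrt{K}+1} \leq |\calY| \leq 2^K$. For $j \in B_s$ one has $|W_j| \leq a_{s+2}$, hence the sum dominates $\sum_{s=0}^{\sqrt{K}-1}(a_{s+1}-a_s)/a_{s+2}$. Substituting $v_s := a_s/a_{s+1}$ with $v_0 := 0$, this equals $\sum_{s=1}^{\sqrt{K}} v_s(1-v_{s-1})$, subject to $\prod_{s=1}^{\sqrt{K}} v_s \geq 2^{-K}$. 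The main obstacle is the elementary inequality that this last quantity is $\Omega(\sqrt{K}\cdot 2^{-\sqrt{K}})$: the extremal case (via AM-GM plus a short Lagrangian calculation on a few variables) is approximately uniform $v_s \approx 2^{-\sqrt{K}}$, and the complication of empty buckets ($v_s = 1$) is handled by restricting the AM-GM to non-empty indices, where $1 - v_{s-1}$ is bounded below. Plugging this back in,
\[
\Pr\!\left[2\tell_t \leq -\sqrt{K}\,\xi\right] \;\geq\; \frac{c'}{K}\cdot \sqrt{K}\cdot 2^{-\sqrt{K}} \;=\; \Omega\!\left(\frac{2^{-\sqrt{K}}}{\sqrt{K}}\right) \;\geq\; 2^{-2\sqrt{K}},
\]
where the last inequality is valid for $K$ above a fixed constant, which is ensured since the algorithm only reaches the recursive step when $|\calY| > 2^{1000}$ and hence $K > 1000$.
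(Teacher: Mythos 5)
Your proof is correct in outline, but it takes a genuinely different route from the paper's proof, and it leaves the heaviest lifting as an asserted ``elementary inequality.''

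\paragraph{Comparison of the two approaches.}
The paper's proof isolates a single good scale: it defines $i^\ast := \argmin_{i \in \{0,\dots,K\}} \ell_{y^{(2^i)}} - i\xi$, sets $k^\ast = i^\ast + \lceil\sqrt{K}\rceil$, fixes $k_t = k^\ast$, and lower-bounds the probability of the one event that $\calS_t$ contains exactly one of the $2^{i^\ast}$ cheapest candidates and none of the ``middle band'' $\{y^{(2^{i^\ast}+1)},\dots,y^{(2^{k^\ast})}\}$; the choice of $i^\ast$ is exactly what guarantees the $\sqrt K\,\xi$ gap at that scale. Your proof instead avoids picking any privileged scale: you decompose the success event into the disjoint pieces $E_j$ indexed by which candidate ends up as $\min\calS_t$, average over all sampling rates $k_t$, and reduce the whole probability lower bound to the combinatorial inequality $\sum_{j:L_j\le K\xi} 1/|W_j| = \Omega(\sqrt K\,2^{-\sqrt K})$. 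Your decomposition is in a sense dual to the paper's: where they find the best $i^\ast$ and argue about a single event, you sum over all positions and push the optimization into an inequality over the profile $(a_s)$. Both hinge on the same hypergeometric estimate, and both ultimately exploit the pigeonhole fact that a chain of $\sqrt K$ ratios multiplying to at least $2^{-K}$ must have a large individual ratio. One thing your route buys is a (slightly) better constant, since you sum disjoint events rather than bounding a single one.

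\paragraph{What needs more care.}
The step you flag as ``the main obstacle'' really is the crux, and ``AM-GM plus a short Lagrangian calculation on a few variables'' with ``empty buckets handled by restricting to non-empty indices'' is not yet an argument. The inequality is true, and here is a clean way to close it: let $A := \{s : v_{s-1} \le 1/2\}$ and $B := \{s : v_s \le 1/2\}$. If some $s$ has $v_{s-1}\le 1/2$ and $v_s > 1/2$, the single term $v_s(1-v_{s-1}) > 1/4$ already dominates. Otherwise $v_{s-1}\le 1/2 \Rightarrow v_s \le 1/2$; since $v_0 = 0$, induction gives $v_s \le 1/2$ for all $s$, hence $1 - v_{s-1} \ge 1/2$ for every $s$, so $S \ge \tfrac12\sum_s v_s \ge \tfrac{n}{2}\,2^{-n}$ by convexity of $x\mapsto 2^{-x}$ applied to $\sum_s \log_2(1/v_s)\le n^2$. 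This also dispenses with the ``empty bucket'' worry ($v_s = 1$) automatically, since those land in the first case. Two smaller items to tidy up: (i) your hypergeometric lower bound $\Pr[E_j\mid k_t=k]\ge c\,p_k(1-p_k)^{|W_j|-1}$ only holds in the regime $p_k \le 1/|W_j|$ (i.e.\ $2^{K-k}\cdot |W_j| \le 2^K$), which is exactly your choice of $k_j^\star$, but you should say so; and (ii) for $|W_j|=1$ there is no $k\in[K]$ with $p_k\in[1/2,1]$, so take $k=1$ directly, which gives probability $1/2\ge 1/(2e)$. With these filled in, the final chain gives $\Pr \ge \frac{1}{2eK}\cdot\frac{\sqrt K}{2}\cdot 2^{-\sqrt K} = \frac{2^{-\sqrt K}}{4e\sqrt K} \ge 2^{-2\sqrt K}$, using $K\ge 1000$, matching the statement.
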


\begin{proof}
Let $y^{(1)}, \dots, y^{(2^K)}$ be the elements of $\calY$ ordered in non-decreasing losses. (Note that $y^{(1)} = y^*$.) Let
\begin{align*}
i^* := \argmin_{i \in \{0, \dots, K\}} \ell_{y^{(2^i)}} - i \cdot \xi.
\end{align*}
Notice that, by our choice of $i^*$, we have
\begin{align*}
\ell_{y^{(2^{i^*})}} - \ell_{y^{(1)}} = i^* \cdot \xi + \left(\ell_{y^{(2^{i^*})}} - i^* \cdot \xi\right) - \left(\ell_{y^{(1)}} - 0 \cdot \xi\right) \leq i^* \cdot \xi \leq K \cdot \xi. 
\end{align*}
Rearranging this yields
\begin{align} \label{eq:gapsmallest-and-istar}
\ell_{y^{(2^{i^*})}} - \ell_{y^{(1)}} - (K + \sqrt{K}) \cdot \xi  \leq -\sqrt{K} \cdot \xi
\end{align}

Next, let $k^* := i^* + \lceil \sqrt{K} \rceil$. Consider two cases:
\begin{itemize}
\item Case I: $k^* \geq K$. In this case, consider the event $k_t = K$ and the only element of $\calS_t$ comes from $\{y^{(1)}, \dots, y^{(2^{i^*})}\}$. This event occurs with probability
\begin{align*}
\frac{1}{K} \cdot \frac{2^{i^*}}{2^K} \geq \frac{1}{K} \cdot \frac{1}{2^{\sqrt{K} + 1}} \geq \frac{1}{2^{2\sqrt{K}}},
\end{align*}
where the first inequality follows from $i^* + \lceil \sqrt{K} \rceil = k^* \geq K$ and the second inequality comes from the lower bound on $K$. 

When these event occurs, we have
\begin{align*}
2\tell_t \leq \ell_{y^{(2^{i^*})}} - \ell_{y^{(1)}} - (K + \sqrt{K}) \cdot \xi \overset{\eqref{eq:gapsmallest-and-istar}}{\leq} - \sqrt{K} \cdot \xi,
\end{align*}
where, in the first inequality, we also use that fact that $\gap\left((\ell_y)_{y \in \calS_t}\right) = \infty$ since $|\calS_t| = 1$.

\item Case II: $k^* \leq K$. In this case, consider the following two events:
\begin{itemize}
\item $\calE$: $|\{y^{(1)}, \dots, y^{2^{(i^*)}}\} \cap \calS_t| = 1$.
\item $\calE'$: $\{y^{(2^{(i^*)}+1)}, \dots, y^{2^{k^*}}\} \cap \calS_t = \emptyset$.
\end{itemize}
Notice that, when both of these events hold, we have 
\begin{align*}
\gap(\calS_t) &\geq \ell_{y^{(2^{k^*}+1)}} - \ell_{y^{(2^{i^*})}}  \\
&= (k^* \cdot \xi - i^* \cdot \xi) + \left(\ell_{y^{(2^{k^*})}} - k^* \cdot \xi\right) - \left(\ell_{y^{(2^{i^*})}} - i^* \cdot \xi\right) \\
&\geq (k^* \cdot \xi - i^* \cdot \xi) \\
&\geq \sqrt{K} \cdot \xi,
\end{align*}
where the second inequality is due to our choice of $i^*$ and the third is due to the choice of $k^*$.

Recall also \Cref{eq:gapsmallest-and-istar}.
As a result, when these events occur, we have $2\tell_t \leq - \sqrt{K} \cdot \xi$.

Finally, notice that the probability that these events occur can be bounded as follows:
\begin{align*}
\Pr[\calE \wedge \calE'] &\geq \Pr[k_t = k^*] \cdot \Pr[\calE \wedge \calE' \mid k_t = k^*] \\
&= \frac{1}{K} \cdot \frac{2^{i^*} \cdot \binom{2^K-2^{k^*}}{2^{K-k^*}-1}}{\binom{2^K}{2^{K-k^*}}} \\
&= \frac{1}{K} \cdot \frac{2^{i^*} \cdot \binom{2^K-2^{k^*}}{2^{K-k^*}-1}}{\frac{2^K}{2^{K-k^*}} \cdot \binom{2^K-1}{2^{K-k^*}-1}} \\
&= \frac{1}{K} \cdot \frac{1}{2^{k^* - i^*}} \cdot \prod_{j=0}^{2^{K-k^*}-2} \frac{2^K-2^{k^*}-j}{2^K-1-j} \\
&= \frac{1}{K} \cdot \frac{1}{2^{k^* - i^*}} \cdot \prod_{j=0}^{2^{K-k^*}-2} \left(1 - \frac{2^{k^*}-1}{2^K -1 - j} \right) \\
&\geq \frac{1}{K} \cdot \frac{1}{2^{\sqrt{K} + 1}} \cdot \left(1 - \frac{2^{k^*}-1}{2^K - 2^{K-k^*}+1} \right)^{2^{K-k^*}} \\
&\geq \frac{1}{K} \cdot \frac{1}{2^{\sqrt{K} + 1}} \cdot \left(1 - \frac{2^{k^*}}{2^K - 2^{K-k^*}} \right)^{2^{K-k^*}}. \\
\end{align*}

The last term on the RHS can be bounded using the Bernoulli's inequality as follows. (Recall that $k^* \leq K - 1$ in this case.)
\begin{align*}
\left(1 - \frac{2^{k^*}}{2^K - 2^{K-k^*}} \right)^{2^{K-k^*}} &= \left(\left(1 - \frac{2^{k^*}}{2^K - 2^{K-k^*}} \right)^{2^{K-k^*-1}}\right)^2 \\
&\geq \left(1 - \frac{2^{K-1}}{2^K - 2^{K-k^*}}\right)^2 \\
&= \left(1 - \frac{1}{2 - 2^{1-k^*}}\right)^2 \\
&\geq \left(1 - \frac{1}{1.5}\right)^2 \geq 0.1,
\end{align*}
where we use the Bernoulli's inequality in the first inequality above, and the second inequality follows from $\sqrt{K} > 1$.

Combining the above two inequality, we get 
\begin{align*}
\Pr[\calE \wedge \calE'] \geq \frac{1}{K} \cdot \frac{1}{2^{\sqrt{K} + 1}} \cdot 0.1 \geq \frac{1}{2^{2\sqrt{K}}}.
\end{align*}
\end{itemize}
Thus, in both cases, the desired inequality holds.
\end{proof}

With the above lemma ready, we are now ready to finish the accuracy proof.

\paragraph{Good Events.}
Consider the following ``good'' events:
\begin{itemize}
\item $\calE_1$: $\min_{t \in [T]} 2\tell_t \leq - \sqrt{K} \cdot \xi$
\item $\calE_2$: $\tell_{t_{\out}} \leq \gamma\left(T, \frac{4\rho}{5}, \frac{4\beta}{5}\right) + \min_{t \in [T]} \tell_t$
\item $\calE_3: \ell_{y_{\out}} = \min_{y \in \calS_{t_{\out}}} \ell_y$
\end{itemize}

\paragraph{From Good Events to Accuracy.}
Before we bound the probability that these event occurs, let us show argue that, when these good events occur, we get the desired accuracy. To do this, first note that $\calE_1, \calE_2$ together implies
\begin{align}
&2\gamma\left(T, \frac{4\rho}{5}, \frac{4\beta}{5}\right) - \sqrt{K} \cdot \xi \nonumber \\
&\geq 2\tell_{t_{\out}} = \max\left\{\min_{y \in \calS_{t_{\out}}} \ell_y - \ell_{y^*} - (K + \sqrt{K}) \cdot \xi(K, \rho, \beta), -\gap\left((\ell_y)_{y \in \calS_{t_{\out}}}\right)\right\}.  \label{eq:selection-good-subset}
\end{align}
By focusing on just the first term in the $\max$ in \eqref{eq:selection-good-subset}, we get
\begin{align*}
\min_{y \in \calS_{t_{\out}}} \ell_y \leq \ell_{y^{(1)}} + K \cdot \xi + 2\gamma\left(T, \frac{4\rho}{5}, \frac{4\beta}{5}\right).
\end{align*}
Then, from $\calE_3$, we have
\begin{align} \label{eq:final-util-bound}
\ell_{y_{\out}} = \min_{y \in \calS_{t_{\out}}} \ell_y \leq \ell_{y^{(1)}} + K \cdot \xi + 2\gamma\left(T, \frac{4\rho}{5}, \frac{4\beta}{5}\right) \overset{\text{(\Cref{lem:first-ineq})}}{\leq} \ell_{y^{(1)}} + \gamma(|\calY|, \rho, \beta),
\end{align}
as desired.

\paragraph{Bounding the Probability of Good Events.}
We have
\begin{align} \label{eq:good-prob-expand}
\Pr[\calE_1 \wedge \calE_2 \wedge \calE_3]
= \Pr[\calE_3 \mid \calE_1 \wedge \calE_2] \cdot \Pr[\calE_1 \wedge \calE_2]
\geq \Pr[\calE_3 \mid \calE_1 \wedge \calE_2] - \Pr[\neg \calE_1] - \Pr[\neg \calE_2].
\end{align}

We next bound each term separately.

\paragraph{Bounding $\Pr[\calE_3 \mid \calE_1 \wedge \calE_2]$.} Assume that both $\calE_1, \calE_2$ hold. 
By focusing on just the second term in the $\max$ in \eqref{eq:selection-good-subset}, we have 
\begin{align*}
\gap((\ell_y)_{y \in \calS_{t_{\out}}}) \geq \sqrt{K} \cdot \xi - 2\gamma\left(T, \frac{4\rho}{5}, \frac{4\beta}{5}\right) \overset{\text{(\Cref{lem:second-ineq})}}{\geq} \tau\left(|\calY|, \frac{\rho}{5}, \frac{\beta}{10}\right).
\end{align*}
Thus, by \Cref{lem:gap-bin-search}, $\Pr[\calE_3 \mid \calE_1 \wedge \calE_2] \geq 1 - \beta/10$.

\paragraph{Bounding $\Pr[\neg \calE_1]$.}
From \Cref{lem:prob-good-score-one-subset}, we have
\begin{align*}
\Pr[\neg \calE_1] \leq \left(1 - \frac{1}{2^{2\sqrt{K}}}\right)^T \leq  e^{-\frac{T}{2^{2\sqrt{K}}}} \leq e^{-2^{\sqrt{K}-1}} \leq \beta/10,
\end{align*}
where the penultimate inequality is from our choice of $T$ and the last inequality is from our assumptions $K \geq 1000$, which implies $\frac{2^{-K}}{10} \geq e^{-2^{\sqrt{K}-1}},$ and $\beta \geq 2^{-K}.$

\paragraph{Bounding $\Pr[\neg \calE_2]$.} It follows immediately from the inductive hypothesis that $\Pr[\neg \calE_2] \leq 4\beta/5$.

\paragraph{Putting Things Together.} Putting the three bounds together with \Cref{eq:good-prob-expand}, we have that $\Pr[\calE_1 \wedge \calE_2 \wedge \calE_3] \geq 1 - \beta$. As argued earlier, when $\calE_1, \calE_2, \calE_3$ all occur, \Cref{eq:final-util-bound} holds. This concludes our proof of the inductive step.

\subsection{From High-Probability to Expected Error: Proof of \Cref{thm:main-exp}}

Finally, we can turn the high-probability guarantee from \Cref{thm:main} to an expected error guarantee. The idea is simple: We run our algorithm and the binary tree algorithm. We then use a query to compare the two outputs and select the best one. This is described more formally in \Cref{alg:combined} and analyzed below.

\begin{algorithm}[h!]
\caption{\combined$\left((\ell_y)_{y \in \calY}; \rho\right)$}
\label{alg:combined}
\begin{algorithmic}[1]
\Require{Loss functions $(\ell_y)_{y \in \calY}$,~} \textbf{Parameter: } Privacy budget $\rho$
\Ensure{Index $y_{\out} \in \calY$}
\State $K \gets \lceil \log |\calY| \rceil$
\State $\beta \gets \frac{1}{K}$
\State $y^1_{\out} \gets \recurgap\left((\ell_y)_{y \in \calY}; \frac{\rho}{3}, \beta\right)$
\State $y^2_{\out} \gets \bintree\left((\ell_y)_{y \in \calY}; \frac{\rho}{3}\right)$
\State Query $q := \frac{1}{2}\left(\ell_{y^1_{\out}}(X) - \ell_{y^2_{\out}}(X)\right)$ with budget $\frac{\rho}{3}$; let the answer be $\tilde{q}$
\If{$\tq > 0$}
\State $y_{\out} \gets y^2_{\out}$
\Else
\State $y_{\out} \gets y^1_{\out}$
\EndIf
\Return $y_{\out}$
\end{algorithmic}
\end{algorithm}

\begin{proof}[Proof of \Cref{thm:main-exp}]
Let $Z = q - \tq$ denote the noise added to the query $q$. Recall that $Z \sim \calN\left(0, \frac{3}{2\rho}\right)$. Note that $\ell_{y_{\out}} - \min\left\{\ell_{y^1_{\out}}, \ell_{y^2_{\out}}\right\} \leq 2|Z|$. As a result, we have
\begin{align*}
\E[\ell_{y_{\out}}] \leq \E\left[\min\left\{\ell_{y^1_{\out}}, \ell_{y^2_{\out}}\right\}\right] + \E[2|Z|] \leq \E\left[\min\left\{\ell_{y^1_{\out}}, \ell_{y^2_{\out}}\right\}\right] + O\left(\frac{1}{\sqrt{\rho}}\right). 
\end{align*}
Next, let $\gamma = O\left(\frac{1}{\sqrt{\rho}} \cdot \log |\calY| \cdot (\log \log |\calY|)^{10} \cdot \log\left(\frac{\log |\calY|}{\beta}\right)\right)$ denote the error guarantee from \Cref{thm:main}. From this, consider $y'$ defined as follows:
\begin{align*}
y' =
\begin{cases}
y^1_{\out} & \text{ if } \ell_{y^1_{\out}} \leq \min_{y \in \calY} \ell_y + \gamma, \\
y^2_{\out} & \text{ otherwise.}
\end{cases}
\end{align*}
Clearly, $\ell_{y'} \geq \min\left\{\ell_{y^1_{\out}}, \ell_{y^2_{\out}}\right\}$. Also recall that $K = \lceil \log |\calY| \rceil$. Thus, we have
\begin{align}
&\E\left[\min\left\{\ell_{y^1_{\out}}, \ell_{y^2_{\out}}\right\}\right] \notag \\
&\leq \E\left[\ell_{y'}\right] \notag \\
&= \E\left[\ell_{y^{1}_{\out}} ~\middle\vert~ \ell_{y^1_{\out}} \leq \min_{y \in \calY} \ell_y + \gamma\right] \cdot \Pr\left[\ell_{y^1_{\out}} \leq \min_{y \in \calY} \ell_y + \gamma\right] \notag \\
&\qquad + \E\left[\ell_{y^{2}_{\out}} ~\middle\vert~ \ell_{y^1_{\out}} > \min_{y \in \calY} \ell_y + \gamma\right] \cdot \Pr\left[\ell_{y^1_{\out}} > \min_{y \in \calY} \ell_y + \gamma\right] \notag \\
&= \E\left[\ell_{y^{1}_{\out}} ~\middle\vert~ \ell_{y^1_{\out}} \leq \min_{y \in \calY} \ell_y + \gamma\right] \cdot \Pr\left[\ell_{y^1_{\out}} \leq \min_{y \in \calY} \ell_y + \gamma\right] + \E\left[\ell_{y^{2}_{\out}}\right] \cdot \Pr\left[\ell_{y^1_{\out}} > \min_{y \in \calY} \ell_y + \gamma\right] \label{eqn:indep}\\
&\leq \left(\min_{y \in \calY} \ell_y + \gamma\right) \cdot \Pr\left[\ell_{y^1_{\out}} \leq \min_{y \in \calY} \ell_y + \gamma\right] + \left(\min_{y \in \calY} \ell_y + O\left(\frac{K^{3/2}}{\sqrt{\rho}}\right)\right) \cdot \Pr\left[\ell_{y^1_{\out}} > \min_{y \in \calY} \ell_y + \gamma\right] \label{ineq:lemma4}\\
&\leq \min_{y \in \calY} \ell_y + \gamma + O\left(\frac{K^{3/2}}{\sqrt{\rho}}\right) \cdot \Pr\left[\ell_{y^1_{\out}} > \min_{y \in \calY} \ell_y + \gamma\right] \notag \\
&\leq \min_{y \in \calY} \ell_y + \gamma + O\left(\frac{K^{3/2}}{\sqrt{\rho}}\right) \cdot \beta \label{ineq:recurgap} \\
&\leq \min_{y \in \calY} \ell_y + O\left(\frac{1}{\sqrt{\rho}} \cdot K (\log K)^{11}\right), \label{ineq:usesK}
\end{align}
where \eqref{eqn:indep} is due to the fact that $y^2_{\out}$ is independent of the event $\ell_{y^1_{\out}} > \min_{y \in \calY} \ell_y + \gamma$, \eqref{ineq:lemma4} follows from \Cref{lem:bin-search-exp}, \eqref{ineq:recurgap} follows from the guarantee of~\recurgap~(\Cref{thm:main}), and \eqref{ineq:usesK} is due to our setting $\beta = \frac{1}{K}$ in Algorithm \eqref{alg:combined}. 

Thus, we can conclude that $\E[\ell_{y_{\out}}] \leq \min_{y \in \calY} \ell_y + O\left(\frac{1}{\sqrt{\rho}} \cdot K (\log K)^{11}\right)$ as desired.
\end{proof}



\subsection*{Acknowledgment}

We thank Charlie Harrison for introducing us to this problem and for subsequent insightful discussions. We also thank Thomas Steinke for helpful discussions.

\bibliography{ref}
\bibliographystyle{alpha}

\appendix

\section{On Equal vs Unequal Budgets}
\label{app:eqaul-v-unequal}

We use a model (\Cref{def:model-with-budget}) in which we are allowed to associate different budgets to the different queries. Strictly speaking, this is not the same as the model proposed in \cite{DPorg-open-problem-selection}, which uses the same budget for all queries. This is stated more precisely below.
\begin{definition}[Gaussian Mechanism with Equal Budget~\cite{DPorg-open-problem-selection}]
In the \emph{Gaussian mechanism with equal budget} model, the algorithm is given a finite budget $\rho > 0$ and can select the number of rounds $M$. At the $i$-th iteration, the algorithm can issue a sensitivity-1 query $q_i: \calX \to \R$, and it will receive back $q_i(X) + Z_i$ where $Z_i \sim \calN\left(0, \frac{M}{2\rho}\right)$. 
\end{definition}

Nevertheless, using the fact that the sum of Gaussian noises remain Gaussian, it is simple to show that the two models are equivalent, as formalized below. Note that the condition (that an upper bound on the number of rounds $M$ is known) is quite mild, and it can be easily seen that all algorithms discussed in this paper (e.g. \Cref{alg:final}) satisfy this condition.

\begin{lemma}
Let $\calA$ be any algorithm in the Gaussian mechanism with budget model that works in at most $M$ rounds and uses total budget $\rho$. Then, $\calA$ can be implemented in the Gaussian mechanism with equal budget model with total budget $2\rho$.
\end{lemma}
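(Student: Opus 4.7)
The plan is to simulate $\calA$ inside the equal-budget model by replacing each single adaptive budget-$\rho_i$ query with a \emph{batch} of identical equal-budget queries, averaging them, and then adding a small amount of extra Gaussian noise to make the resulting distribution exactly match what $\calA$ expects. Throughout, the simulating algorithm knows the upper bound $M$ on $\calA$'s rounds, so it can commit to $M' := 2M$ rounds up front in the equal-budget model with total budget $2\rho$; each of these equal-budget queries then returns the true value corrupted by independent noise $\calN\!\left(0, \tfrac{M'}{4\rho}\right) = \calN\!\left(0, \tfrac{M}{2\rho}\right)$.

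The simulation proceeds as follows. The simulator runs $\calA$ internally. Whenever $\calA$ wants to issue query $q_i$ with budget $\rho_i$, the simulator chooses
$$k_i := \left\lceil \frac{M \rho_i}{\rho} \right\rceil,$$
issues $q_i$ exactly $k_i$ times (as consecutive rounds) in the equal-budget model, averages the responses to get $\bar Y_i = q_i(X) + \bar Z_i$ with $\bar Z_i \sim \calN\!\left(0, \tfrac{M}{2\rho k_i}\right)$, draws a fresh independent $W_i \sim \calN\!\left(0, \tfrac{1}{2\rho_i} - \tfrac{M}{2\rho k_i}\right)$, and hands $\bar Y_i + W_i$ back to $\calA$ as the response. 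The variance inside $W_i$ is nonnegative precisely because our choice of $k_i$ guarantees $\tfrac{M}{2\rho k_i} \leq \tfrac{1}{2\rho_i}$, and by Gaussian convolution the total noise $\bar Z_i + W_i$ is distributed exactly as $\calN\!\left(0, \tfrac{1}{2\rho_i}\right)$, which is precisely the noise distribution $\calA$ would have received in the budget model. Since the $W_i$'s are drawn independently of the transcript, the entire view $\calA$ sees is identically distributed to its view in the original budget model, so any output of $\calA$ has the same distribution under the simulation.

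It only remains to verify that the simulation fits within $M' = 2M$ rounds. Using $\lceil x \rceil \leq x + 1$ and $\sum_i \rho_i \leq \rho$,
$$\sum_{i=1}^{M} k_i \;\leq\; \sum_{i=1}^M \left(\frac{M \rho_i}{\rho} + 1\right) \;\leq\; M + M \;=\; 2M \;=\; M',$$
so the simulator never exceeds its round budget; if $\calA$ halts earlier, the simulator simply does the same. The main (mild) subtlety is adaptivity: the simulator must compute $\rho_i$ and $q_i$ on the fly from $\calA$'s internal state, but since the equal-budget model likewise permits adaptive queries, this causes no issue. The only conceptual step worth flagging is the addition of $W_i$, which exploits the fact that adding independent Gaussian noise is pure post-processing and is therefore free; this is what lets us absorb the mismatch between the fixed per-query variance $\tfrac{M}{2\rho}$ of the equal-budget model and the variable target variance $\tfrac{1}{2\rho_i}$ of the budget model, without spending any additional privacy budget.
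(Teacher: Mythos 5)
Your proposal is essentially identical to the paper's proof: both simulate each budget-$\rho_i$ query by issuing it $\lceil M\rho_i/\rho\rceil$ times in a $2M$-round, $2\rho$-budget equal-budget model, averaging, and adding a calibrated independent Gaussian to exactly match the target variance $\tfrac{1}{2\rho_i}$. If anything, you spell out the round-counting bound $\sum_i k_i \leq 2M$ slightly more explicitly than the paper does.
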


\begin{proof}
We simulate $\calA$ using $M' = 2M$ rounds in the Gaussian mechanism with equal budget model with total budget $\rho'  = 2\rho$ as follows:
\begin{itemize}
\item At round $i$ of $\calA$, suppose that $\calA$ issues a query $q_i$ with budget $\rho_i$.
\item We simulate it in the equal budget model by issuing $q_i$ a total of $m_i = \left\lceil \frac{M' \rho_i}{\rho'} \right\rceil$ times to get back answers $\tq_i^1, \dots, \tq^i_{m_i}$.
\item We then send back the answer $\tq_i = \frac{1}{m_i}\left(\tq_i^1 + \cdots + \tq^i_{m_i}\right) + \calN\left(0, \frac{1}{2\rho_i} - \frac{M'}{2\rho' m_i}\right)$.
\end{itemize}

It is simple to verify that the distribution of the answer $\tilde{q}_i$ is the same as that in the Gaussian mechanism with budget model, and that the total budget used is no more than $2\rho$ as desired.
\end{proof}

We note that our reduction does \emph{not} work for Laplace mechanism because the sum of Laplace noises does not follow the Laplace distribution.

\section{Proof of \Cref{lem:first-ineq,lem:second-ineq}}
\label{app:ineq-proofs}

To prove these lemmas, we will use the following proposition. Note that here and throughout, we allows the first arguments of $\xi, \gamma$ to be any positive real numbers.

\begin{proposition} \label{prop:cal-gemini}
For all $K \in \mathbb{R}$ such that $K \ge 1000$ and for all $\beta \in \mathbb{R}$ such that $0 < \beta \le 0.001$, the following inequality holds:
\[
\xi(K, \rho, \beta) \ge 36 \cdot \xi\left(3\sqrt{K}, \frac{4\rho}{5}, \frac{4\beta}{5}\right)
\]
\end{proposition}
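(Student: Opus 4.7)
The plan is to unwind the definition of $\xi$ on both sides, then split the resulting inequality into two independent factors and bound each one.  After cancelling the common factor $1000$ and using $\sqrt{\rho}/\sqrt{4\rho/5} = \sqrt{5}/2$, the claim reduces to
\[
(1 + \log K)^{10} \log\!\left(\tfrac{1000(K+1)}{\beta}\right) \;\ge\; 18\sqrt{5}\cdot (1 + \log(3\sqrt{K}))^{10} \log\!\left(\tfrac{1250(3\sqrt{K}+1)}{\beta}\right).
\]
So I would write the left/right ratio as $A(K)\cdot B(K,\beta)$ with
\[
A(K) := \left(\frac{1+\log K}{1+\log(3\sqrt K)}\right)^{10}, \qquad B(K,\beta) := \frac{\log(1000(K+1)/\beta)}{\log(1250(3\sqrt K+1)/\beta)},
\]
and show $A(K)\cdot B(K,\beta) \ge 18\sqrt 5 \approx 40.249$.

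For $B$, note that because $\beta \le 0.001$ and $K \ge 1000$, both logarithms are positive, so $B \ge 1$ reduces to the purely algebraic claim $1000(K+1) \ge 1250(3\sqrt K+1)$, i.e.\ $4K - 15\sqrt K - 1 \ge 0$.  Setting $u = \sqrt K$, the positive root of $4u^2 - 15u - 1 = 0$ is $(15+\sqrt{241})/8 < 4$, so the inequality holds for $K \ge 16$, in particular for $K \ge 1000$.  Hence $B(K,\beta) \ge 1$ throughout the range.

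For $A$, write $L = \log K$ and $g(L) = (1+L)/(1 + \log 3 + L/2)$.  A one-line derivative computation gives $g'(L) = (1/2 + \log 3)/(1 + \log 3 + L/2)^2 > 0$, so $g$ is strictly increasing in $L$ and $A(K)$ is minimized at the endpoint $K = 1000$.  At that point, $L = \log_2 1000 \in (9.96, 9.97)$ and $\log_2 3 \in (1.584, 1.586)$, giving $g(L) \ge 10.965/7.569 > 1.4489$, and so $A(1000) \ge (1.4489)^{10}$.  A direct squaring calculation yields $(1.4489)^{10} > 40.8 > 40.249 \ge 18\sqrt 5$, which combined with $B \ge 1$ finishes the proof.

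\paragraph{Main obstacle.}
The only real obstacle is that the numerical bound is quite tight: $A(1000) \approx 40.80$ while $18\sqrt 5 \approx 40.25$.  So the bookkeeping has to be careful—in particular, one must be honest about using $L \ge \log_2 1000 \approx 9.966$ rather than $L \ge 10$, and must square $g(L)$ to a high enough precision to see that the slack $A(1000) - 18\sqrt 5 \approx 0.55$ is positive.  Everything else is routine calculus and algebra.
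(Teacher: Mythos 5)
Your proof matches the paper's argument essentially line for line: same reduction to the ratio inequality after cancelling $1000/\sqrt{\rho}$ and pulling out $36\cdot\sqrt{5}/2 = 18\sqrt{5}$, same split into the two factors $A$ and $B$, same monotonicity-of-$g$ argument for $A$, and same algebraic check that $B\ge 1$. One small numerical caution: your intermediate bound $10.965/7.569 > 1.4489$ is actually false ($10.965/7.569 \approx 1.4487$), though the conclusion $g(\log_2 1000) > 1.4489$ does hold with more precise constants; the paper avoids this tightness by using the looser but safely correct bound $g > 1.448$, which still gives $(1.448)^{10} > 40.5 > 18\sqrt{5}$.
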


\begin{proof}
We begin by substituting the definition of $\xi$ into the inequality. 
The inequality becomes:
\[
\frac{1000}{\sqrt{\rho}} (1 + \log_2 K)^{10} \log_2\left(\frac{1000(K+1)}{\beta}\right) \ge 18\sqrt{5} \cdot \frac{1000}{\sqrt{\rho}} \left(1 + \log_2(3\sqrt{K})\right)^{10} \log_2\left(\frac{1250(3\sqrt{K}+1)}{\beta}\right)
\]
Rearranging yields the equivalent inequality:
\[
\left(\frac{1 + \log_2 K}{1 + \log_2(3\sqrt{K})}\right)^{10} \cdot \frac{\log_2\left(\frac{250}{\beta} \cdot 4(K+1)\right)}{\log_2\left(\frac{250}{\beta} \cdot 5(3\sqrt{K}+1)\right)} \ge 18\sqrt{5}
\]
The proof proceeds by establishing lower bounds for the two multiplicative terms on the left.

\paragraph{Lower Bound for the First Term.}
Let us analyze the first term, which we denote $R_K(K)$:
\[
R_K(K) = \left(\frac{1 + \log_2 K}{1 + \log_2(3\sqrt{K})}\right)^{10} = \left(\frac{1 + \log_2 K}{1 + \log_2 3 + \frac{1}{2}\log_2 K}\right)^{10}
\]
Consider the base of this term as a function of $x = \log_2 K$. Let $g(x) = \frac{1+x}{1+\log_2 3 + 0.5x}$. The derivative with respect to $x$ is:
\[
g'(x) = \frac{(1)(1+\log_2 3 + 0.5x) - (1+x)(0.5)}{(1+\log_2 3 + 0.5x)^2} = \frac{0.5 + \log_2 3}{(1+\log_2 3 + 0.5x)^2} > 0,
\]
which implies that $g(x)$ is a strictly increasing function of $x$. For the domain $K \ge 1000$, the minimum value of $g(x)$ occurs at the minimum value of $x$, which is $x_{\min} = \log_2 1000$. We have $g(x) \geq g(x_{\min}) > 1.448$.
Thus, $R_K(K) > (1.448)^{10} > 40.5$.

\paragraph{Lower Bound for the Second Term.} We claim that the second term is at least one. This is because $4(K + 1) - 5(3\sqrt{K} + 1) = 3\sqrt{K}(\sqrt{K} - 5) + (K - 1) > 0$ under our assumption $K \geq 1000$.

\paragraph{Conclusion.}
By combining the lower bounds established in the preceding paragraphs, we have:
\[
\left(\frac{1 + \log_2 K}{1 + \log_2(3\sqrt{K})}\right)^{10} \cdot \frac{\log_2\left(\frac{250}{\beta} \cdot 4(K+1)\right)}{\log_2\left(\frac{250}{\beta} \cdot 5(3\sqrt{K}+1)\right)} > 40.5 > 18\sqrt{5}. \qedhere
\]
\end{proof}

\Cref{lem:first-ineq,lem:second-ineq} now follow rather simply from the above proposition.

\begin{proof}[Proof of \Cref{lem:first-ineq}]
Notice that $\xi$ is non-decreasing in its first argument and that $\lceil \log T \rceil \leq 3\sqrt{K}$. Thus, we have
\begin{align*}
K \cdot \xi\left(K, \rho, \beta\right) + 2\gamma\left(T, \frac{4\rho}{5}, \frac{4\beta}{5}\right) &= K \cdot \xi(K, \rho, \beta) + 2 \cdot 2\lceil \log T \rceil \cdot \xi\left(\lceil \log T \rceil, \frac{4\rho}{5}, \frac{4\beta}{5}\right) \\
&\leq K \cdot \xi(K, \rho, \beta) + 12 \sqrt{K} \cdot \xi\left(3\sqrt{K}, \frac{4\rho}{5}, \frac{4\beta}{5}\right) \\
(\text{\Cref{prop:cal-gemini}}) &\leq  \left(K + \frac{\sqrt{K}}{3}\right) \cdot \xi(K, \rho, \beta) \\
&\leq 2K \cdot \xi(K, \rho, \beta) \\
&= \gamma(|\calY|, \rho, \beta). \qedhere
\end{align*}
\end{proof}

\begin{proof}[Proof of \Cref{lem:second-ineq}]
Similar to the proof of \Cref{lem:first-ineq}, we have
\begin{align*}
\sqrt{K} \cdot \xi\left(K, \rho, \beta\right) - 2\gamma\left(T, \frac{4\rho}{5}, \frac{4\beta}{5}\right)
&= \sqrt{K} \cdot \xi\left(K, \rho, \beta\right) - 2 \cdot 2\lceil \log T \rceil \cdot \xi\left(\lceil \log T \rceil, \frac{4\rho}{5}, \frac{4\beta}{5}\right) \\
&\geq \sqrt{K} \cdot \xi\left(K, \rho, \beta\right) - 12 \sqrt{K} \cdot \xi\left(3\sqrt{K}, \frac{4\rho}{5}, \frac{4\beta}{5}\right) \\
(\text{\Cref{prop:cal-gemini}}) &\geq \frac{\sqrt{K}}{2} \cdot \xi\left(K, \rho, \beta\right) \\
&= \frac{\sqrt{K}}{2} \cdot \frac{1000}{\sqrt{\rho}} (1 + \log K)^{10} \log(1000(K+1)/\beta) \\
&\geq \frac{\sqrt{K}}{2} \cdot \frac{4\sqrt{5}}{\sqrt{\rho}} \cdot 1 \cdot \sqrt{\log(10K/\beta)} \\
&= \tau\left(|\calY|, \frac{\rho}{5}, \frac{\beta}{10}\right)  \qedhere
\end{align*}
\end{proof}
\end{document}